\def\cL{{\cal L}}
\def\cT{{\cal T}}
\def\pari{\mathsf{Parikh}}
\def\cov{\mathsf{cover}}
\def\bin{\mathsf{bin}}
\def\free{\mathit{free}}
\def\lef{\mathit{left}}
\def\Set{{\sf Set}}
\def\ml{{\sf ml}}
\def\AF{{\sf AF}}
\def\RT{{\sf RT}}
\def\ART{{\sf ART}}
\def\A4F{{\sf A4F}}
\begin{document}

\title{Abelian Repetition Threshold Revisited}
\author{Elena A. Petrova \and Arseny M. Shur}
\institute{Ural Federal University, Ekaterinburg, Russia\\ \{elena.petrova,arseny.shur\}@urfu.ru}
\maketitle

\begin{abstract}
Abelian repetition threshold $\ART(k)$ is the number separating fractional Abelian powers which are avoidable and unavoidable over the $k$-letter alphabet. The exact values of $\ART(k)$ are unknown; the lower bounds were proved in [A.V. Samsonov, A.M. Shur. On Abelian repetition threshold. RAIRO ITA, 2012] and conjectured to be tight. We present a method of study of Abelian power-free languages using random walks in prefix trees and some experimental results obtained by this method. On the base of these results, we conjecture that the lower bounds for $\ART(k)$ by Samsonov and Shur are not tight for all $k\ne 5$ and prove this conjecture for $k=6,7,8,9,10$. Namely, we show that $\ART(k)>\frac{k-2}{k-3}$ in all these cases.

\keywords{Abelian-power-free language, repetition threshold, prefix tree, random walk}
\end{abstract}

\section{Introduction}
Two words are \emph{Abelian equivalent} if they have the same multiset of letters (in other  terms, if they are anagrams of each other); \emph{Abelian repetition} is a pair of Abelian equivalent factors in a word. The study of Abelian repetitions originated from the question of Erd\"os \cite{Erd61}: does there exist an infinite finitary word having no consecutive pair of Abelian equivalent factors? The factors of the form $u\bar u$, where $u$ and $\bar u$ are Abelian equivalent, are now called \emph{Abelian squares}. In modern terms, Erd\"os's question looks like ``are Abelian squares avoidable over some finite alphabet?" This question was answered in the affirmative by Evdokimov \cite{Evd68}; the smallest possible alphabet has cardinality 4, as was proved by Ker\"anen \cite{Ker92}.  In a similar way, Abelian $k$th powers are defined for arbitrary $k\ge 2$. Dekking \cite{Dek79} constructed infinite ternary words without Abelian cubes and infinite binary words without Abelian 4th powers. The results by Dekking and Ker\"anen form an Abelian analog of the seminal result by Thue \cite{Thue06}: there exist an infinite ternary word containing no squares (factors of the form $uu$) and an infinite binary word containing no cubes (factors of the form $uuu$). 

Integral powers of words were later generalized to rational (fractional) powers: given a word $u$ of length $n$, take a length-$m$ prefix $v$ of the infinite word $uuu\cdots$; then $v$ is the $(\frac{m}{n})$th power of $u$ ($m>n$ is assumed). Usually, $v$ is referred to as a $(\frac{m}{n})$-power. A word is said to be \emph{$\alpha$-power-free} if it contains no $(\frac{m}{n})$-powers with $\frac{m}{n}\ge\alpha$. Fractional powers gave rise to the notion of \emph{repetition threshold} which is the function $\RT(k)=\inf\{\alpha:\text{there exists an infinite $k$-ary $\alpha$-power-free word}\}$. The value $\RT(2)=2$ is known since Thue \cite{Thue12}. Dejean \cite{Dej72} showed that $\RT(3)=7/4$ and conjectured the remaining values $\RT(4)=7/5$ (proved by Pansiot \cite{Pan84c}) and $\RT(k)=\frac{k}{k-1}$ for $k\ge 5$ (proved by efforts of many authors \cite{Mou92,Car07,CuRa11,Rao11}). An extension of the notions of fractional power and repetition threshold to the Abelian case was proposed by Samsonov and Shur \cite{SaSh12}. Integral Abelian powers can be generalized to fractional ones in several ways; however, for the case $m\le 2n$, one definition of an Abelian $(\frac{m}{n})$-power is preferable due to its symmetric nature. 
According to this definition, a word $vu\bar v$ is an Abelian $(\frac{m}{n})$th power of the word $vu$ if $|vu|=n$, $|vu\bar v|=m$, and $\bar v$ is Abelian equivalent to $v$ (in \cite{SaSh12}, such Abelian powers were called \emph{strong}). Note that the reversal of an Abelian $(\frac{m}{n})$-power is also an Abelian $(\frac{m}{n})$-power. In this paper, we consider \emph{only} strong Abelian powers; see Section~\ref{s:pre} for the definition in the case $m>2n$.
Given the definition of fractional Abelian powers, one naturally defines Abelian $\alpha$-power-free words and Abelian repetition threshold $\ART(k)=\inf\{\alpha:\text{there exists an infinite $k$-ary Abelian $\alpha$-power-free word}\}$. Cassaigne and Currie \cite{CaCu99} showed that for any $\varepsilon >0$ their exists an Abelian $(1+\varepsilon)$-power-free word over a finite alphabet of size $2^{\mathrm{poly}(\varepsilon^{-1})}$. Surely, this bound is very loose but it proves that $\lim_{k\to\infty}\ART(k)=1$. In \cite{SaSh12}, the lower bounds $\ART(4)\ge 9/5$, $\ART(k)\ge\frac{k-2}{k-3}$ for $k\ge5$ were proved and conjectured to be tight; in full, this conjecture is as follows.

\begin{conjecture}[\!\!\cite{SaSh12}] \label{c:artold}
$\ART(2)=11/3$; $\ART(3)=2$; $\ART(4)=9/5$;  $\ART(k)=\frac{k-2}{k-3}$ for $k\ge 5$.
\end{conjecture}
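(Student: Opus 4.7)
Since \cite{SaSh12} establishes the lower bounds $\ART(4)\ge 9/5$ and $\ART(k)\ge\frac{k-2}{k-3}$ for $k\ge5$, and the inequalities $\ART(2)\ge 11/3$ and $\ART(3)\ge 2$ lie within reach of the same combinatorial techniques (the latter following from the classical unavoidability of Abelian squares on three letters), the task reduces to proving matching upper bounds. Concretely, for each $k$ and every $\alpha$ strictly greater than the conjectured value $L_k$, I must exhibit an infinite $k$-ary word that is Abelian $\alpha$-power-free.

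My plan is to realise these witness words as fixed points of carefully chosen morphisms $h_k\colon\Sigma_k^*\to\Sigma_k^*$, in the spirit of Dejean, Pansiot, and Carpi. For each $k$, I would run a computer search over prolongable morphisms of small uniformity (say up to length $50$) and accept candidates whose fixed points carry no Abelian repetition of exponent exceeding $L_k$ on a long prefix. Candidates would be pre-filtered by Parikh-vector constraints: near the lower bound the images $h_k(a)$ should have nearly equal Parikh vectors, suggesting a restricted search over ``Parikh-balanced'' morphisms. For the delicate small-alphabet cases $k=2,3,4$, where morphic candidates of reasonable size are unlikely to exist, I would fall back on structured non-morphic constructions---block codings of Sturmian or Arnoux-Rauzy sequences---tuned so that the Parikh vectors of factors of any given length lie in a tight window around the expected frequencies.

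Once a candidate $h_k$ is found, the decisive step is to promote the finite-prefix check to a statement about the whole fixed point. The standard argument shows that an Abelian $\alpha$-power of large length in $h_k^\omega$ pulls back through $h_k$ to an Abelian $\alpha'$-power of strictly smaller length, with $\alpha'$ close to $\alpha$; iterating yields a bounded instance verifiable by exhaustive search. The prefix-tree random-walk machinery developed later in this paper fits in here as well: by providing sharp estimates for the density of Abelian-$\alpha$-power-free words it would both guide the morphism search and certify, for each accepted candidate, that the empirically observed maximal exponent is genuinely the asymptotic one.

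The main obstacle is the morphism construction itself. Unlike for ordinary powers, where Dejean's programme benefited from cleanly iterative digit-based words, Abelian equivalence is far more permissive: blocks $h_k(a)$ and $h_k(b)$ with disjoint supports can still conspire to form an Abelian repetition, so the pull-back argument above is not automatic and demands a strong rigidity condition on the Parikh vectors of the images. Securing such rigidity for small $k$ is extremely restrictive, and I expect this combinatorial barrier---rather than any numerical verification---to be the place where the programme either succeeds or must be replaced by a genuinely new construction of infinite Abelian-power-free words.
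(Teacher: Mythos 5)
There is a fundamental mismatch here: the statement you are trying to prove is not a result of this paper at all, but a conjecture quoted from \cite{SaSh12}, and the paper's own contribution is evidence that it is \emph{false} for every $k\neq 5$. In particular, Theorem~\ref{t:678910} proves by exhaustive search that $\ART(k)>\frac{k-2}{k-3}$ for $k=6,7,8,9,10$, i.e.\ the languages $\AF(k,\frac{k-2}{k-3}^+)$ are finite (no word longer than about $150$ letters exists). Consequently the upper-bound half of your programme --- exhibiting, for each $\alpha>\frac{k-2}{k-3}$, an infinite $k$-ary Abelian $\alpha$-power-free word, whether as a morphic fixed point or as a coding of an Arnoux--Rauzy word --- is chasing objects that provably do not exist for $k=6,\dots,10$, and that the paper conjectures (Conjecture~\ref{c:art}) do not exist for $k=2,3,4$ either. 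No amount of cleverness in the morphism search can rescue this; the search would simply terminate with no candidate, which is exactly what the paper's random-walk and exhaustive experiments observe.

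Even setting aside the falsity of the target, your sketch is a research plan rather than a proof: no morphism is produced, and the ``standard pull-back argument'' you invoke --- that a long Abelian $\alpha$-power in $h_k^\omega$ descends through $h_k$ to a shorter one of nearly the same exponent --- is not standard in the Abelian setting and is false without strong additional hypotheses, precisely because $u\sim v$ says nothing about factors of $u$ and $v$; this is the rigidity problem you yourself flag at the end, and it is where Dekking's and Ker\"anen's proofs required genuinely ad hoc machinery. Finally, note that the known constructions cannot even witness the small-alphabet cases: as remarked in Section~\ref{s:fin}, Ker\"anen's quaternary word is not $\alpha$-A-free for any $\alpha<2$, and Dekking's ternary and binary words are not $\alpha$-A-free for any $\alpha<3$ and $\alpha<4$ respectively, so matching upper bounds such as $\ART(4)\le 9/5$ or $\ART(2)\le 11/3$ would already require new constructions --- and the paper's experiments suggest these bounds are unattainable. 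The correct reading of the statement is as a historical conjecture that this paper revisits and, for $k=6,\dots,10$, refutes.
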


Up to now, no exact values of $\ART(k)$ are known. One reason for the lack of progress in estimating $\ART(k)$ is the fact that the number of $k$-ary Abelian $\alpha$-power-free words can be finite but so huge that these words cannot be enumerated by exhaustive search. 

In the present study, we approach Abelian $\alpha$-power-free words using randomized depth-first search. The language $\AF(k,\alpha)$ of all $k$-ary Abelian $\alpha$-power-free words is viewed as a \emph{prefix tree} $\cT_{k,\alpha}$: the elements of $\AF(k,\alpha)$ are nodes of the tree and $u$ is an ancestor of $v$ in the tree iff $u$ is a prefix of $w$. The search starts at the root (empty word) and is organized as follows. Reaching a node $u$ for the first time, we choose a random letter $a$, check \emph{ad hoc} whether $ua\in \AF(k,\alpha)$ and descend to $ua$ if this node exists; visiting $u$ next times, we choose a random letter among the letters not chosen at $u$ before, and proceed the same way. If there is no choice, we return to the parent of $u$ (and thus will never reach $u$ in the future). We repeated the search multiple times and analysed the maximum level of a node reached in the tree and the change of level during the search. Based on this analysis, we state 

\begin{conjecture} \label{c:art}
$\ART(2)>11/3$; $2<\ART(3)\le 5/2$; $\ART(4)>9/5$; $\ART(5)=3/2$; $4/3<\ART(6)<3/2$; $\ART(k)=\frac{k-3}{k-4}$ for $k\ge 7$.
\end{conjecture}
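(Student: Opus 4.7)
The plan is to split Conjecture~\ref{c:art} into its strict lower bounds ($\ART(2)>11/3$, $\ART(3)>2$, $\ART(4)>9/5$, $\ART(6)>4/3$, together with the companion bounds $\ART(5)\ge 3/2$ and $\ART(k)\ge\frac{k-3}{k-4}$ for $k\ge 7$) and its upper bounds ($\ART(3)\le 5/2$, $\ART(5)\le 3/2$, $\ART(6)<3/2$, and $\ART(k)\le\frac{k-3}{k-4}$ for $k\ge 7$); the two families call for very different techniques, and I would tackle them independently.

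For each strict lower bound $\ART(k)>c$ I would establish, via K\"onig's lemma, that the prefix tree $\cT_{k,c}$ is finite, i.e.\ $\AF(k,c)$ contains only finitely many words. The route is a \emph{deterministic} DFS on $\cT_{k,c}$ (the randomized variant of the paper now serves only to gauge how deep such a search must go), pruned by the $S_k$-action on letters and by closure of $\AF(k,c)$ under reversal. For $k\in\{6,7,8,9,10\}$ at the old thresholds $c=\frac{k-2}{k-3}$ this approach is within reach of the hardware already used in the paper, and matches its main theorem. Pushing each $k\ge 7$ all the way up to $c=\frac{k-3}{k-4}$ will require a sharper combinatorial filter: from a short Abelian $c$-power one extracts a finite family of forbidden Parikh profiles $\pari(\cdot)$ on length-$\ell$ windows (with $\ell$ controlled by $c/(c-1)$) and argues, window by window, that no word beyond a fixed length can avoid them all. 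The case $k=2$, $c=11/3$ is the most daunting: $\cT_{2,11/3}$ is expected to be astronomical, so here one would need a genuine structural reduction rather than brute enumeration.

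For each upper bound $\ART(k)\le c$ the goal is an explicit infinite $k$-ary word avoiding Abelian $c$-powers, most naturally the fixed point $h^\omega(a)$ of a uniform morphism $h:\{0,\dots,k-1\}^{*}\to\{0,\dots,k-1\}^{*}$. Verification reduces to checking all length-bounded candidates inside $h^n(a)$ for a computable $n$: the Parikh vector of any factor of $h^\omega(a)$ splits as a sum of complete $\pari(h(b))$ plus two bounded boundary terms, so Abelian $c$-powers can be certified or refuted by a finite case distinction. To locate candidate $h$, I would tune the paper's randomized DFS to emit very long $c$-power-free words and mine them for repeated block-structure suggesting a D0L grammar, in the spirit of Ker\"anen's discovery of the Abelian-square-free morphism. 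The main obstacle is precisely this search: the morphism space is vast, there is no Abelian analogue of the Cassaigne--Currie-type sufficient conditions that would verify a candidate in one shot, and a uniform family $\{h_k\}_{k\ge 7}$ witnessing $\ART(k)\le\frac{k-3}{k-4}$---an Abelian counterpart of Rao's uniform construction for Dejean's conjecture---is not in sight. The equality $\ART(5)=3/2$, the strict inequality $\ART(6)<3/2$, and the upper half of the $k\ge 7$ family all hinge on finding such morphisms, which is why Conjecture~\ref{c:art} remains a conjecture rather than a theorem.
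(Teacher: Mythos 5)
The statement under review is a \emph{conjecture}, and the paper does not prove it: its entire justification is experimental. What the paper does is run the randomized DFS of Algorithm~\ref{alg:dfs}, augmented with forced backtracks, on the prefix trees of the relevant languages and classify each walk's behaviour as finite-like or infinite-like, then conjecture $\ART(k)>c$ from finite-like behaviour of $\AF(k,c^+)$ and $\ART(k)\le c$ from infinite-like behaviour. The only rigorously proved piece of the lower-bound picture is Theorem~\ref{t:678910} ($\ART(k)>\frac{k-2}{k-3}$ for $k=6,\dots,10$), obtained by \emph{non}-randomized exhaustive DFS combined with the reduction Lemma~\ref{l:search}; this is strictly weaker than what Conjecture~\ref{c:art} claims for $k\ge 7$. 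Your write-up correctly treats the statement as open and offers a research programme rather than a proof, which is the honest reading. That programme is genuinely different from the paper's methodology: you describe what a proof would require — finiteness via K\"onig's lemma and deterministic DFS for the strict lower bounds, and explicit morphic witnesses with block-Parikh verification for the upper bounds — whereas the paper only collects heuristic evidence beyond Theorem~\ref{t:678910} and explicitly notes (Section~\ref{s:fin}) that the Ker\"anen and Dekking morphisms cannot serve as upper-bound witnesses below $\alpha=2,3,4$ respectively.

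Two corrections to the plan. First, $\AF(k,\alpha)$ is closed under reversal only for $\alpha\le 2$; the paper introduces \emph{dual} Abelian powers (Section~\ref{ss:big}) precisely because the reversal of a strong $\alpha$-A-power with $\alpha>2$ need not be an $\alpha$-A-power. Your proposed pruning ``by closure of $\AF(k,c)$ under reversal'' therefore does not apply to $k=2$ (threshold $11/3>2$) nor to $k=3$ (threshold $2^+$); in those cases one instead works with the reversed language and detects dual powers, as Algorithm~\ref{alg:dual} does. Second, establishing an equality $\ART(k)=c$ requires both halves: $\AF(k,\beta)$ finite for every $\beta<c$ (which does follow from finiteness of $\AF(k,c)$ by the nesting of the family — but your sketch should say so) \emph{and} $\AF(k,c^+)$ infinite; you describe the two sides as though they could be pursued in isolation, but each equality in the conjecture needs both.
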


As a first step in proving this conjecture, we prove $\ART(k)>\frac{k-2}{k-3}$ for $k=6,7,8,9,10$ by some exhaustive search  (Theorem~\ref{t:678910} in Section~\ref{s:exhaust}). 

The paper is organized as follows. After preliminary Section~\ref{s:pre}, we describe the algorithms used in our study and the results obtained through experiments in Sections~\ref{s:rand} and~\ref{s:exhaust} respectively. Section~\ref{s:fin} contains some final remarks and prospects for future studies.

\section{Definitions and Notation} \label{s:pre}

We study finite words over finite alphabets, using standard notation $\Sigma$ for a (linearly ordered) alphabet, $\sigma$ for its size, $\Sigma^*$ for the set of all finite words over $\Sigma$, including the empty word $\lambda$. For a length-$n$ word $u\in\Sigma^*$ we write $u=u[1..n]$; the elements of the range $[1..n]$ are \emph{positions} in $u$, the length of $u$ is denoted by $|u|$. A word $w$ is a \emph{factor} of $u$ if $u=vwz$ for some (possibly empty) words $v$ and $z$; the condition $v=\lambda$ (resp., $z=\lambda$) means that $w$ is a \emph{prefix} (resp., \emph{suffix}) of $u$. Any factor $w$ of $u$ can be represented as $w=u[i..j]$ for some $i$ and $j$ ($j<i$ means $w=\lambda$). A factor $w$ of $u$ can have several such representations; we say that $su[i..j]$ specifies the \emph{occurrence of $w$ at position $i$}.

A \emph{$k$-power} of a word $u$ is the concatenation of $k$ copies of $u$, denoted by $u^k$. This notion can be extended to \emph{$\alpha$-powers} for an arbitrary rational $\alpha>1$. The $\alpha$-power of $u$ is the word $u^{\alpha}=u\cdots uu'$ such that $|u^\alpha|=\alpha|u|$ and $u'$ is a prefix of $u$. A word is \emph{$\alpha$-free} (resp., \emph{$\alpha^+\!$-free}) if no one of its factors is a $\beta$-power with $\beta\ge \alpha$ (resp., $\beta> \alpha$).

The \emph{Parikh vector} $\Psi(u)$ of a word $u\in\Sigma^*$ is an integer vector of length $\sigma$ whose coordinates are the numbers of occurrences of the letters from $\Sigma$ in $u$. Thus, the word $acabac$ over the alphabet $\Sigma=\{a<b<c<d\}$ has the Parikh vector $(3,1,2,0)$. Two words $u$ and $v$ are \emph{Abelian equivalent} (denoted by $u\sim v$) if  $\Psi(u)=\Psi(v)$. The \emph{reversal} of a word $u=u[1..n]$ is the word $u^R=u[n]u[n{-}1]\cdots u[1]$. Clearly, $u\sim u^R$. A nonempty word $u$ is an \emph{Abelian $k$th power} (\emph{$k$-A-power}) if $u=w_1\cdots w_k$, where $w_i\sim w_j$ for all indices $i,j$. A 2-A-power is an \emph{Abelian square}, and a 3-A-power is an \emph{Abelian cube}. Thus, $k$-A-powers generalize $k$-powers by relaxing the equality of factors to their Abelian equivalence. However, there are many ways to generalize the notion of an $\alpha$-power to the Abelian case, and all of them have drawbacks. The reason is that $u\sim v$ does not imply $u[i..j]\sim v[i..j]$ for any pair of factors of $u$ and $v$. If $1<\alpha\le 2$, we define an $\alpha$-A-power as a word $vuv'$ such that $\frac{|vuv'|}{|vu|}=\alpha$ and $v\sim v'$. The advantage of this definition is that the reversal of an $\alpha$-A-power is an $\alpha$-A-power as well. For $\alpha>2$ the situation is worse: all natural definitions compatible with the definition of $k$-A-power are not symmetric with respect to reversals (see \cite{SaSh12} for more details). So we give the definition which is compatible with the case $\alpha\le 2$: an $\alpha$-A-power is a word $u_1\cdots u_ku'$ such that $\frac{|u_1\cdots u_ku'|}{|u_1|}=\alpha$, $k=\lfloor \alpha\rfloor$, $u_1\sim\cdots \sim u_k$, and $u'$ is Abelian equivalent to a prefix of $u_1$. In \cite{SaSh12}, such words are called \emph{strong} Abelian $\alpha$-powers. For a given $\alpha$, \emph{$\alpha$-A-free} and \emph{$\alpha^+\!$-A-free} words are defined in the same way as $\alpha$-free ($\alpha^+\!$-free) words. It is convenient to extend rational numbers with ``numbers'' of the form $\alpha^+$, postulating the equivalence of the inequalities $\beta>\alpha$ and $\beta\ge \alpha^+$ (resp., $\beta\le \alpha$ and $\beta<\alpha^+$).

A \emph{language} is any subset of $\Sigma^*$. The \emph{reversal} $L^R$ of a language $L$ consists of the reversals of all words in $L$. The $\alpha$-A-free language over $\Sigma$ (where $\alpha$ belongs to extended rationals) consists of all $\alpha$-A-free words over $\Sigma$ and is denoted by $\AF(\sigma,\alpha)$. These languages are the main objects of our study aimed at finding the \emph{Abelian repetition threshold}, which is the function $\ART(k)=\inf\{\alpha: \AF(k,\alpha) \text{ is infinite}\}$. The languages $\AF(\sigma,\alpha)$ are closed under permutations: if $\pi$ is a permutation of the alphabet, then the words $u$ and $\pi(u)$
are $\alpha$-A-free for exactly the same values of $\alpha$. This makes possible the enumeration of the words in languages $\AF(\sigma,\alpha)$ by considering only \emph{lexmin} words: a word $u\in\Sigma^*$ is lexmin if $u<\pi(u)$ for any permutation $\pi$ of $\Sigma$.

Suppose that a language $L$ is \emph{factorial} (i.e., closed under taking factors of its words); for example, all languages $\AF(\sigma,\alpha)$ are factorial. Then $L$ can be represented by its \emph{prefix tree} $\cT_L$, which is a rooted labeled tree whose nodes are elements of $L$ and edges have the form $u\xrightarrow{a} ua$, where $a$ is a letter. Thus $u$ is an ancestor of $v$ iff $u$ is a prefix of $v$. We study the languages $\AF(\sigma,\alpha)$ through different types of search in their prefix trees.

\section{Algorithms} \label{s:rand}

In this section we present the algorithms we develop for the use in experiments. First we describe the random depth-first search in the prefix tree $\cT=\cT_L$ of an arbitrary factorial language $L$. Given a number $N$, the algorithm visits $N$ distinct nodes of $\cT$ following the depth-first order and returns the maximum level of a visited node. The search can be easily augmented to return the word corresponding to the node of maximum level, or to log the sequence of levels of visited nodes. Algorithm~\ref{alg:dfs} below describes one iteration of the search. In the algorithm, $u=u[1..n]$ is the word corresponding to the current node; $\Set[u]$ is the set of all letters $a$ such that the search has not tried the node $ua$ yet; $\ml$ is the maximum level reached so far; $\mathsf{count}$ is the number of visited nodes; $\cL(u)$ is the predicate returning $\mathsf{true}$ if $u\in L$ and $\mathsf{false}$ otherwise. The lines 3 and 8 refer to the updates of data structures used to compute $\cL(u)$. The search starts with $u=\lambda$, $\ml=0$, $\mathsf{count}=1$. A variant of this search algorithm was used in \cite{PeSh21dlt} to numerically estimate the entropy of some $\alpha$-free and $\alpha$-A-free languages.

\begin{algorithm*}[!htb]
\caption{Random depth-first search in $\cT(L)$: one iteration} 
\label{alg:dfs}
\begin{algorithmic}[1]
    \If {$\mathsf{count}=N$} break\Comment{search finished}
    \EndIf
    \If {$\Set[u]=\varnothing$} \Comment{all children of $u$ were visited}
        \State{[update data structures]}
        \State{$u\gets u[1..|u|{-}1]$ \Comment{return to the parent of $u$}}
    \Else \State{$a\gets$ random($\Set[u]$);  $\Set[u]\gets\Set[u] - a$ \Comment{take random unused letter}}
        \If {$\cL(ua)$} \Comment{the node $ua$ is in $\cT(L)$}
            \State{[update data structures]}
            \State{$u\gets ua$; $\Set[u]\gets \Sigma$; $\mathsf{count}\gets\mathsf{count}+1$ \Comment{visit $ua$ next}}
            \If {$|u|>\ml$} $\ml\gets|u|$\Comment{update the maximum level}
            \EndIf
        \EndIf
    \EndIf
\end{algorithmic}
\end{algorithm*}

The key to an efficient search is a fast algorithm computing the predicate $\cL(u)$. The following fact is very useful: if $u'$ is a proper prefix of $u$, then the node for $u'$ exists and hence $\cL(u')=\mathsf{true}$. Below we present four algorithms checking, for a given $\alpha$, whether a word has a $\beta$-A-power with $\beta\ge\alpha$ as a suffix. The cases $\alpha<2$ and $\alpha>2$ are considered in Sections~\ref{ss:small} and~\ref{ss:big} respectively. 

\subsection{Avoiding small powers} \label{ss:small}
Let $\alpha<2$ and $u$ be a word of length $n$ such that all proper prefixes of $u$ are $\alpha$-A-free. To prove $u$  $\alpha$-A-free, it is necessary and sufficient to show that 
\begin{itemize}
    \item[$(\star)$] no suffix of $u$ can be written as $xyz$ such that $|z|>0$, $x\sim z$, and $\frac{|xyz|}{|xy|}\ge\alpha$.
\end{itemize}
\begin{remark} \label{r:n2}
Since Abelian equivalence is not closed under taking any sort of factors of words, the ratio $\frac{|xyz|}{|xy|}$ in $(\star)$ can significantly exceed $\alpha$. For example, all proper prefixes, and even all proper factors, of the word $u=abcde\,bdaec$ are  $\frac32$-A-free, while $u$ is an Abelian square. Hence for each suffix $z$ of $u$ one should check multiple candidates to the factor $x$ in $(\star)$. The number of such candidates can be as big as $\Theta(n)$; in total, $\Theta(n^2)$ candidates to the pair $(x,z)$ should be analysed.
\end{remark}

A reasonable approach is to store the Parikh vectors of all prefixes of $u$; they spend $O(n\sigma)$ words of space in total and require $O(\sigma)$ time for update when a letter is appended or deleted on the right end of the word. Then the Parikh vector of each factor of $u$ is obtained as the difference of Parikh vectors of corresponding prefixes. So one comparison of factors takes $O(\sigma)$ time, which means $\Theta(n^2\sigma)$ time for performing all comparisons of candidate pairs $(x,z)$ in $(\star)$ in a naive way (see Remark~\ref{r:n2}). Algorithm~\ref{alg:asmall} below gets many of the comparisons for free. It makes use of two length-$n$ arrays for each letter $a\in\Sigma$: $c_a[i]$ is the number of occurrences of $a$ in $u[1..i]$ (= a coordinate of the Parikh vector of the prefix $u[1..i]$) and $d_a[i]$ is the position of $i$th from the left letter $a$ in the current word $u$. Each of the arrays can be updated in $O(\sigma)$ time when a letter is appended or deleted on the right end of $u$. We specify the lines 3 and 8 of Algorithm~\ref{alg:dfs} as follows. At line 3, we delete the Parikh vector of $u$ from $c$-arrays and delete the last element of $d_b$, where $b$ is the last letter of $u$. At line 8, we add the Parikh vector of $ua$ to $c$-arrays and add a new element $|ua|$ to $d_a$.

The arrays $c_a$ and $d_a$ are used to compute two auxiliary functions, $\mathsf{Parikh}(i,j)$ and $\mathsf{cover}(\vec P, j)$. The function $\mathsf{Parikh}(i,j)$ returns the Parikh vector of $u[i..j]$; its coordinates are just the differences of the form $c_a[j]-c_a[i-1]$. The function $\mathsf{cover}(\vec P,j)$ returns the biggest number $i$ such that $\Psi(u[i..j])\ge \vec P$ or zero if there is no such number. Thus the function returns 0 if $u[1..j]$ contains, for some $a$, less $a$'s than $\vec P(a)$; i.e., if $c_a[j]<\vec P[a]$. If no such $a$ exists, then $i=\min_{a\in\Sigma}\{d_a[c_a[j]-\vec P[a]+1]\}$.
 
\begin{algorithm*}[!htb]
\caption{Abelian powers detection (case $\alpha<2$)} 
\label{alg:asmall}
\begin{algorithmic}[1]
\State{\textbf{function} $\mathsf{alphafree}(u)$ \Comment{$u$=word; $n=|u|$}}
\State{$\free\gets \mathsf{true}$ \Comment{$\alpha$-A-freeness flag}}
\For{$i=n$ downto 1+$\lceil n/2\rceil$} \Comment{$z=u[i..n]$}
    \State{$right\gets i-1$}
    \State{$len\gets n-i+1$; $\vec P\gets \pari(i,n)$ \Comment{length and Parikh vector of $z$}}
    \State{$\lef\gets \cov(\vec P, right)$}
    \If{$\lef=0$} break \Comment{$\vec P(u[1..right])\not\ge \vec P(z)$}
    \EndIf
    \While {$\lef\ge \max\{1, \lceil\frac{\alpha i-1-n}{\alpha-1}\rceil\}$}\Comment{guarantees $\frac{|xyz|}{|xy|}\ge\alpha$}
        \If{$right-\lef+1=len$} \Comment{$x=u[\lef..right]\sim z$}
            \State{$\free\gets \mathsf{false}$; break}
        \Else \Comment{shift $right$ leftwards, skip redundant comparisons}
            \State{$right\gets \lef+len-1$}
        \EndIf
        \State{$\lef\gets \cov(\vec P, right)$}
   \EndWhile
    \If{$\free=\textsf{false}$} break
    \EndIf
\EndFor
\State {return $\free$ \Comment{the answer to ``is $u$ $\alpha$-A-free?''}}
\end{algorithmic}
\end{algorithm*}

\begin{proposition} \label{p:corr-asmall}
Let $\alpha$ be a number such that $1<\alpha<2$ and $u$ be a word all proper prefixes of which are $\alpha$-A-free. Then Algorithm~\ref{alg:asmall} correctly detects whether $u$ is $\alpha$-A-free.
\end{proposition}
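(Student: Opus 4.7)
The plan is to match the algorithm's behavior with the condition $(\star)$. Since all proper prefixes of $u$ are $\alpha$-A-free, any $\alpha$-A-power in $u$ must occur as a suffix $xyz$ with $x\sim z$ and $\frac{|xyz|}{|xy|}\ge\alpha$. In such a decomposition $|x|=|z|$, so $|xyz|\ge 2|z|$ forces $|z|\le\lfloor n/2\rfloor$, i.e., $i=n-|z|+1\ge\lceil n/2\rceil+1$. This explains the range of the outer loop. For a fixed $i$, let $p$ denote the starting position of $x$, so that $x=u[p..p+|z|-1]$, $|xy|=i-p$, and $|xyz|=n-p+1$; since $\alpha>1$, the inequality $\frac{|xyz|}{|xy|}\ge\alpha$ rearranges to $p\ge\frac{\alpha i-n-1}{\alpha-1}$, which together with $p\ge 1$ is precisely the guard on $\lef$ controlling the inner \textbf{while} loop (with $\lef$ playing the role of $p$).

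Soundness is then immediate from the semantics of $\cov$: when $\free\gets\mathsf{false}$ fires, the current right endpoint $r$ satisfies $|u[\lef..r]|=len=|z|$ while $\Psi(u[\lef..r])\ge\vec P=\Psi(z)$, so the two Parikh vectors have equal coordinate sums and must coincide; thus $x:=u[\lef..r]\sim z$ with $\lef$ respecting the $\alpha$-threshold, witnessing an $\alpha$-A-power in $u$.

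For completeness I would argue that the skip $r\gets\lef+len-1$ never discards a valid witness. Suppose some $x$ equivalent to $z$ ended at $r''$ with $\lef+len\le r''<r$; then $x=u[r''-len+1..r'']$ would start at $i^{**}=r''-len+1>\lef$. The maximality of $\lef=\cov(\vec P,r)$ forces $\Psi(u[i^{**}..r])\not\ge\vec P$, so some coordinate $a$ has $\Psi(u[i^{**}..r])[a]<\vec P[a]$; trimming the right end to $r''$ only lowers the count of $a$, so $\Psi(u[i^{**}..r''])[a]<\vec P[a]$, contradicting $x\sim z$. Termination of the inner loop is secured by the monotonicity $\cov(\vec P,r')\le\cov(\vec P,r)$ whenever $r'\le r$: in each iteration that does not trigger $\free\gets\mathsf{false}$, $\lef$ must strictly decrease (otherwise the test $r-\lef+1=len$ on the following pass would succeed). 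I expect this skip-correctness argument to be the main obstacle, since it is where the algorithm prunes the $\Theta(n)$ naive right-endpoint comparisons flagged in Remark~\ref{r:n2}; the remainder reduces to bookkeeping on Parikh vectors and arithmetic of the $\alpha$-threshold.
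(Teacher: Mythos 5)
Your proof follows the same overall plan as the paper's: reduce to verifying condition $(\star)$, then argue soundness (a firing of $\free\gets\mathsf{false}$ yields a genuine witness because equal-length factors with $\Psi(x)\ge\Psi(z)$ must have $\Psi(x)=\Psi(z)$) and completeness (no valid witness is skipped, using maximality of $\cov$ plus a trimming argument, together with strict decrease of $\lef$). Your explicit derivation of the loop bounds and your termination argument via monotonicity of $\cov(\vec P,\cdot)$ are correct and in fact slightly cleaner than the paper's informal phrasing.

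However, there is one genuine gap: you never justify the early break in line 7. When $\cov(\vec P,i{-}1)=0$, the algorithm does not merely move to the next $i$ — it abandons the entire outer loop, i.e., it silently declares $(\star)$ satisfied for \emph{all} longer suffixes $u[i'..n]$ with $i'<i$ without examining them. Your completeness argument covers only the inner-loop skip $r\gets\lef+len-1$ and the exit when $\lef$ drops below the $\alpha$-threshold; neither addresses why a violation at a longer suffix cannot be missed by this outer-loop break. The paper closes this hole with a short monotonicity observation: if $\Psi(u[1..i{-}1])\not\ge\Psi(u[i..n])$, then for every $j<i$ we also have $\Psi(u[1..j{-}1])\not\ge\Psi(u[j..n])$, since the left side only shrinks coordinatewise while the right side only grows; hence no $x\sim u[j..n]$ can occur to the left of position $j$, and no longer suffix can violate $(\star)$. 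Without some version of this argument, your proof does not establish that Algorithm~2, as written, is correct. The fix is small — add the monotonicity sentence — but it is a necessary piece, not an optional optimization remark.
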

\begin{proof}
Let us show that Algorithm~\ref{alg:asmall} verifies the condition $(\star)$. The outer cycle of the algorithm fixes the first position $i$ of the suffix $z$ of $u$; the suffixes are analysed in the order of increased length $len=|z|$. If a forbidden suffix $xyz$ is detected during the iteration (we discuss the correctness and completeness of this detection below), then the algorithm breaks the outer cycle in line 14 and returns $\mathsf{false}$. Thus at the current iteration of the outer cycle the condition $(\star)$ is already verified for all shorter suffixes. The iteration uses a simple observation: if $x\sim z$, then every word $v$, containing $x$, satisfies $\Psi(v)\ge \Psi(z)$. We proceed as follows. We fix the rightmost position $right$ where a factor $x$ satisfying $x\sim z$ can end. Initially $right=i-1$ as $x$ can immediately precede $z$ (see the example in Remark~\ref{r:n2}). Then we compute the shortest factor $v=u[\lef..right]$ such that $\Psi(v)\ge \Psi(z)$. If $v=x$, the suffix $xz$ of $u$ violates $(\star)$. Otherwise $x$ cannot begin later than at the position $\lef$ by the construction of $v$. Hence we decrease $right$ by setting $right=\lef+|z|-1$ and repeat the above procedure in a loop. The verification of $(\star)$ for $z$ ends successfully when either $v$ does not exist (i.e., $\Psi(u[1..right])\not\ge \Psi(z)$ for the current value of $right$) or $\lef$ is too small (i.e., $xyz=u[\lef..n]$ with $|x|=|z|$ means $\frac{|xyz|}{|xy|}<\alpha$). The described process is illustrated by Fig.~\ref{f:jumping}. Details are provided below.

\begin{figure}[!htb]
\centering
    \includegraphics[scale=0.76, trim = 50 759 320 38, clip]{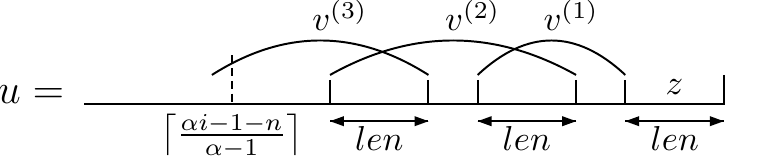}
    \includegraphics[scale=0.76, trim = 40 759 330 38, clip]{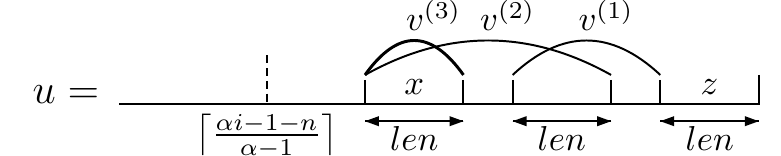}
    \caption{Illustrating the proof of Proposition~\ref{p:corr-adict}. Processing the suffix $z$, Algorithm~\ref{alg:asmall} successively finds three words $v$ satisfying $\Psi(v)\ge \Psi(z)$. On the left picture, the position $\lef$ of $v^{(3)}$ is smaller than the bound in line 8, so the verification of $(\star)$ for $z$ is finished. On the right picture, $|v^{(3)}|=|z|$, so a forbidden suffix, starting with $v^{(3)}$, is detected.}
\label{f:jumping}
\end{figure}

In lines 4--6 the algorithm calls $\pari$ to compute $\Psi(z)$ and $\cov$ to find the mentioned factor $v=u[\lef..right]$ for $right=i-1$. If $\lef=0$, then $\Psi(u[1..i{-}1])\not\ge \Psi(u[i..n])$ and hence every suffix $xyz$ of $u$ satisfies $x\not\sim z$. Moreover, one can observe that $\Psi(u[1..j{-}1])\not\ge \Psi(u[j..n])$ for each $j<i$ which immediately verifies $(\star)$ for all suffixes of $u$ that are longer than $z$. Hence in this case the verification of $(\star)$ is already done; respectively, the algorithm breaks the outer cycle in line 7 and returns $\mathsf{true}$.

If no break has happened, the algorithm enters the inner cycle, checks whether $v=x$ (line 9) and breaks with the output $\mathsf{false}$ if this condition holds. If it does not, the algorithm decreases $right$ as described above (line 12) and computes the new factor $v$ (line 13). If $v$ does not exist, $\lef$ gets 0, which results in the immediate exit from the inner cycle. If $v$ is computed but its position is too small, then the cycle is also exited. The exit means the end of the $i$th iteration.

Thus, Algorithm~\ref{alg:asmall} returns $\mathsf{false}$ only if it finds a suffix $xyz$ of $u$ which violates $(\star)$. For the other direction, let $xyz=u[j..n]$ violate $(\star)$ such that $|z|$ is minimal over all suffixes violating it. Then Algorithm~\ref{alg:asmall} cannot stop before the iteration which checks $z$. During this iteration, $\lef$ cannot become smaller than $j$ by the definition of the factor $v$. As $right$ decreases at each iteration of the inner cycle, eventually $x$ will be found. Thus the algorithm indeed verifies $(\star)$ and thus detects the $\alpha$-A-freeness of $u$. \qed
\end{proof}
\begin{remark} \label{r:time1}
As both $\pari$ and $\cov$ work in $O(\sigma)$ time, Algorithm~\ref{alg:asmall} processes a word $u$ of length $n$ in $O((K+n)\sigma)$ time, where $K$ is the total number of the inner cycle iterations during the course of the algorithm. Clearly, $K=O(n^2)$. The results of experiments suggest that $K=\Theta(n^{3/2})$ on expectation. This is indeed the case if $u$ is a random word, as Lemma~\ref{l:prob} below shows. This lemma implies that the iteration processing a suffix $z$ builds, on expectation, $O(\sqrt{|z|})$ words $v$.  
\end{remark}

\begin{lemma} \label{l:prob}
Suppose that an infinite word $\mathbf{w}$ is chosen uniformly at random among all $\sigma$-ary infinite words, $z$ is a prefix of $\mathbf{w}$, and $v$ is the shortest prefix of $\mathbf{w}[|z|{+}1..\infty]$ such that $\Psi(v)\ge \Psi(z)$. Then the expected length of $v$ is $|z|+\Omega(\sqrt{|z|})$.
\end{lemma}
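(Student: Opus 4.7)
The plan is to reduce the multi-coordinate condition $\Psi(v)\ge\Psi(z)$ to a one-letter inequality and exploit central-limit fluctuations of that single coordinate. Set $n=|z|$ and first observe that $|v|\ge n$ deterministically, because summing the coordinate inequalities yields $|v|=\sum_a c_a(v)\ge\sum_a c_a(z)=n$. Hence it suffices to prove $E[|v|]-n=\Omega(\sqrt n)$.

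Next, I would fix the letter $a=1$ and set $Y:=c_1(z)-n/\sigma$, a centered Binomial with variance $s^2 n$, where $s^2=(1/\sigma)(1-1/\sigma)$. By the one-dimensional CLT, for any fixed $C>0$ one has $P(Y\ge C\sqrt n)\to 1-\Phi(C/s)>0$, so there exist absolute constants $C>0$ and $q>0$ with $P(Y\ge C\sqrt n)\ge q$ for all sufficiently large $n$. Now pick a constant $t>0$ with $t/\sigma<C$ (e.g.\ $t=C\sigma/2$), put $m:=n+\lceil t\sqrt n\rceil$, and let $B:=c_1(\mathbf w[n{+}1..n{+}m])\sim\mathrm{Binomial}(m,1/\sigma)$. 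Because $B$ is a count over positions past $n$, it is independent of $z$. The event $F:=\{B<n/\sigma+C\sqrt n\}$ expresses a deviation of $(C-t/\sigma)\sqrt n$ below the mean $m/\sigma$ of $B$, i.e.\ a positive constant multiple of the standard deviation $s\sqrt m\sim s\sqrt n$; a second CLT application then gives $P(F)\ge q'$ for some constant $q'>0$ and all large $n$.

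On the intersection of $\{Y\ge C\sqrt n\}$ and $F$, we have $c_1(z)\ge n/\sigma+C\sqrt n>B$, so the length-$m$ prefix of $\mathbf w[n{+}1..\infty]$ already violates the coordinate-$1$ inequality $\Psi\ge\Psi(z)$, forcing $|v|>m$. By independence of the two events this intersection has probability at least $qq'>0$, whence
\[
E[|v|]-n\;\ge\;(m-n)\cdot qq'\;=\;\Omega(\sqrt n),
\]
as required. The only subtle point is to keep the CLT-based tail bounds uniform in $n$, but this is automatic because the Gaussian limits on both sides are strict positive constants; no step of the argument is genuinely hard. A by-product of the proof is that one can also localize the excess: with constant probability the shortest $v$ overshoots $|z|$ by a linear-in-$\sqrt{|z|}$ amount, which is exactly the quantitative claim needed to justify the $\Theta(n^{3/2})$ empirical running time mentioned in Remark~\ref{r:time1}.
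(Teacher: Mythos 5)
Your proof is correct, but it follows a genuinely different route from the paper's. The paper works in two stages: it first reduces to the binary alphabet, generates $\ell$ further symbols to form a prefix $x$ of $v$, conditions on the one-coordinate discrepancy $m=\xi-\eta$ between $\Psi(z)$ and $\Psi(x)$, and uses the negative-binomial waiting time to get $E(\delta\mid m)=2|m|$; it then bounds $E(|\xi-\eta|)$ via the variance of $\xi+\eta$ and finally passes to general $\sigma$ by coarsening $\Sigma$ into two super-letters. You instead work with general $\sigma$ directly, pick a single coordinate $a=1$, and argue that the two independent events ``$c_1(z)$ is at least $C\sqrt n$ above its mean'' and ``the length-$m$ block after position $n$ has fewer than $n/\sigma+C\sqrt n$ ones'' both occur with constant probability; on their intersection the overshoot $|v|-n$ exceeds $m-n=\Theta(\sqrt n)$, and taking expectations finishes. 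This buys you uniformity in $\sigma$ without the alphabet-merging step, and it sidesteps a slight imprecision in the paper's argument (the paper identifies $E(|\xi-\eta|)$ with the standard deviation of $\xi-\eta$, which is not literally true, though the $\Omega(\sqrt\ell)$ conclusion survives). The paper's approach, in exchange, is pitched at (nearly) the exact constant, whereas yours is content with an unspecified constant. One cosmetic remark: you describe $F$ as a deviation \emph{below} the mean, but since $t/\sigma<C$ the threshold $n/\sigma+C\sqrt n$ actually sits \emph{above} the mean $m/\sigma$; the event $F$ is that $B$ stays below a point $(C-t/\sigma)\sqrt n$ above its mean, which still has probability bounded away from zero (indeed tending to a value $>1/2$), so the conclusion is unaffected.
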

\begin{proof}
Let $\ell=|z|$, $\delta=|v|-|z|$. First consider the case $\Sigma=\{0,1\}$. The process can be viewed as follows: first, $z$ is generated by $\ell$ tosses of a fair coin; then another $\ell$ tosses generate some prefix $x$ of $v$; additional tosses are made one by one until the desired result $\Psi(v)\ge \Psi(z)$ is reached after $\delta$ tosses. The Parikh vector of a word of a known length over $\Sigma$ is determined by the number of 1's. Hence $\Psi(z)$ is a random variable $\xi$ with the binomial distribution $\bin(\ell,\frac{1}{2})$; similarly, $\Psi(x)$ is a random variable $\eta$ with the same distribution.

The vector $\Psi(z)-\Psi(x)$ has the form $(-m,m)$ for some integer $m$. To obtain $v$, we should make $|m|$ ``successful'' tosses with the probability of ``success'' being $1/2$; hence the expectation of $\delta$ equals $2|m|$. Thus it remains to find the expectation of $|m|=|\xi-\eta|$. Since $E(\xi-\eta)=0$, we see that $E(|\xi-\eta|)$ is the standard deviation of $\xi-\eta$ by definition.

Due to symmetry, $\eta$ and $\ell-\eta$ have the same distribution. Hence we can replace $\xi-\eta$ by $\xi+\eta-\ell$. The random variable $\xi+\eta$ has the binomial distribution $\bin(2\ell,\frac{1}{2})$, so its standard deviation is $\sqrt{\ell/2}$. Thus $E(\delta)=2E(|\xi-\eta|)=\sqrt{2\ell}=\Omega(\sqrt{\ell})$, as desired.

Over larger alphabets the expectation of $\delta$ can only increase. The easiest way to see this is to split $\Sigma$ arbitrarily into two subsets $K_1$ and $K_2$ of equal size. Then $x$ with respect to $z$ has a deficiency of letters from one of these subsets, say, $K_1$. By the argument for the binary alphabet, $\sqrt{2\ell}$ additional letters is needed, on expectation, to cover this deficiency. This is a necessary (but not sufficient) condition to obtain the word $v$. Hence $E(\delta)=\Omega(\sqrt{\ell})$. \qed
\end{proof}

Algorithm~\ref{alg:asmall} significantly speeds up the naive algorithm but is still rather slow. For the case $\alpha\le 3/2$ a much faster dictionary-based Algorithm~\ref{alg:dict} is presented below. However, Algorithm~\ref{alg:dict} consumes the amount of memory which is quadratic in $n$; this limits the depth of the search to the values of about $10^4$ for an ordinary laptop.

When processing a word $u=u[1..n]$, the algorithm has in the dictionary all factors of $u[1..n{-}1]$ up to the Abelian equivalence. Recall that a dictionary contains a set of pairs (key, value), where all keys are unique, and supports fast lookup, addition and deletion by key. For the dictionary used in Algorithm~\ref{alg:dict}, the keys are Parikh vectors and the values are lists of positions, in the increasing order, of the factors having this Parikh vector. The algorithm accesses only the last (maximal) element of the list. Let us describe the updates of the dictionary (lines 3 and 8 of Algorithm~\ref{alg:dfs}). At line 3, we delete all suffixes of $u$ from the dictionary. For a suffix $z$, this means the deletion of the last element from the list $dict[\Psi(z)]$; if the list becomes empty, the entry for $\Psi(z)$ is also deleted. At line 8, all suffixes of $ua$ are added to the dictionary. For a suffix $z$, if $\Psi(z)$ was not in the dictionary, an entry is created; then the position $|ua|-|z|+1$ is added to the end of the list $dict[\Psi(z)]$.

\begin{algorithm*}[!htb]
\caption{Dictionary-based Abelian powers detection ($\alpha\le 3/2$)} 
\label{alg:dict}
\begin{algorithmic}[1]
\State{\textbf{function} $\mathsf{alphafreedict}(u)$ \Comment{$u$=word; $n=|u|$}}
\State{$\free\gets \mathsf{true}$ \Comment{$\alpha$-A-freeness flag}}
\State{$\vec P\gets \vec 0$}
\For{$i=n$ downto 1+$\lceil n/2\rceil$} \Comment{$z=u[i..n]$}
    \State{$len\gets n-i+1$ \Comment{length of $z$}}
    \State{$\vec P[u[i]] \gets P[u[i]]+1$ \Comment{get $\Psi(z)$ from $\Psi(u[i{+}1..n])$}}
    \State{$pos\gets dict[\vec P].last$\Comment{position of last occurrence of some $x\sim z$, if exists}}
    \If{$pos \le i - len$ \textbf{and} $pos\ge \lceil\frac{\alpha i-1-n}{\alpha-1}\rceil$} \Comment{$xyz$ is forbidden}
       \State{$\free\gets \mathsf{false}$; break}
    \EndIf
\EndFor
\State {return $\free$ \Comment{the answer to ``is $u$ $\alpha$-A-free?''}}
\end{algorithmic}
\end{algorithm*}

\begin{proposition} \label{p:corr-adict}
Let $\alpha$ be a number such that $1<\alpha\le 3/2$ and $u$ be a word all proper prefixes of which are $\alpha$-A-free. Then Algorithm~\ref{alg:dict} correctly detects whether $u$ is $\alpha$-A-free.
\end{proposition}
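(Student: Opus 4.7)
I plan to follow the template of Proposition~\ref{p:corr-asmall}, but exploit the fact that here a single dictionary lookup suffices. The easy direction is immediate: if Algorithm~\ref{alg:dict} returns $\mathsf{false}$, the two tests on line~8 produce a disjoint pair $x=u[pos..pos+len-1]$, $z=u[i..n]$ with $\Psi(x)=\Psi(z)$ (from the dictionary match) and ratio $|xyz|/|xy|\ge\alpha$ (from the second inequality), so $xyz$ is a forbidden suffix of $u$.

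For the converse, I argue by contradiction. Assume $u$ is not $\alpha$-A-free and pick a forbidden suffix $xyz=u[p_x..n]$ with $len=|z|$ minimal; set $i=n-len+1$. Since $x$ is a factor of $u'=u[1..n-1]$ (an $\alpha$-A-free proper prefix of $u$), $p_x$ appears in the list $dict[\Psi(z)]$, so $pos:=dict[\Psi(z)].last\ge p_x$. Because $p\mapsto(n-p+1)/(i-p)$ is increasing on $p<i$, the ratio bound $p_x\ge\lceil(\alpha i-1-n)/(\alpha-1)\rceil$ inherited from $xyz$ also holds at $pos$. Hence if $pos\le i-len$, line~8 triggers and the algorithm correctly reports ``not free''. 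The remaining task is to rule out the overlap case $pos>i-len$, where the rightmost $\Psi(z)$-occurrence $x'=u[pos..q]$ (with $q=pos+len-1\le n-1$) overlaps $z$. Set $d=pos-p_x\ge 1$.

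I split on the size of $d$. If $d\ge(\alpha-1)\,len$, I intend to show that the factor $u[p_x..q]$ of $u'$ is itself an $\alpha$-A-power, contradicting the $\alpha$-A-freeness of $u'$. When $d\ge len$, the borders are $x$ and $x'$ themselves, and the exponent $(d+len)/d$ strictly exceeds $\alpha$ thanks to the upper bound $d\le(len+1-\alpha)/(\alpha-1)<len/(\alpha-1)$ inherited from the ratio condition on $xyz$. When $d<len$, the non-overlap portions $v_1=u[p_x..pos-1]$ and $v_2=u[p_x+len..q]$ satisfy $v_1\sim v_2$ (from $\Psi(x)=\Psi(x')$), so $u[p_x..q]=v_1\cdot m\cdot v_2$ is an Abelian power with exponent $(d+len)/len\ge\alpha$.

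The delicate remaining case is $d<(\alpha-1)\,len$. From $p_x\le i-len$ one gets $pos=p_x+d<i-(2-\alpha)\,len$, and this is exactly where the hypothesis $\alpha\le 3/2$ is essential: it is equivalent to $2-\alpha\ge\alpha-1$, which yields $pos<i-(\alpha-1)\,len$. The Abelian equivalence $x'\sim z$ then forces the non-overlap factors $w_L=u[pos..i-1]$ and $w_R=u[q+1..n]$ to be Abelian equivalent and of common length $i-pos>(\alpha-1)\,len$; hence $u[pos..n]=w_L\cdot m\cdot w_R$ is an $\alpha$-A-power suffix of $u$ whose ``$z$-part'' $w_R$ has length $|w_R|=i-pos<len$, contradicting the minimality of $len$. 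The main obstacle is this last case: it is the only step that genuinely uses $\alpha\le 3/2$, and the crux is observing that the large-$d$ and small-$d$ contradiction routes jointly cover all admissible $d$ exactly when $2-\alpha\ge\alpha-1$.
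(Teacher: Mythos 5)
Your proof is correct, and it rests on the same core ingredients as the paper's proof of Proposition~\ref{p:corr-adict}: look at the rightmost occurrence $x'\sim z$ in $u[1..n-1]$, note that the ratio test is preserved since $pos\ge p_x$, and when $x'$ overlaps $z$, derive a contradiction using the $\alpha$-A-freeness of a proper prefix of $u$ together with minimality. Where you diverge is in the organization. The paper takes the forbidden suffix of \emph{minimal total length} $|xyz|$; it first rules out non-overlap of $x'$ with $z$ (else a shorter suffix), then rules out non-overlap of $x'$ with $x$ (else a forbidden prefix), decomposes into $x_1x_2yx_3z_1$, uses the $\alpha$-A-freeness of $u[1..n{-}1]$ to \emph{derive} $|yx_3|<(\alpha-1)\,len$, and from that constructs the shorter forbidden suffix $x'z_1$. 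You instead fix the forbidden suffix with \emph{minimal} $|z|$, set $d=pos-p_x$ and split explicitly on $d\gtrless(\alpha-1)\,len$: large $d$ yields a forbidden factor of $u[1..n{-}1]$ directly (in two easy subcases depending on whether $x$ and $x'$ overlap), small $d$ yields a forbidden suffix with strictly smaller $z$-part. The two routes are logically contrapositive to each other at the step where the paper derives $d<(\alpha-1)\,len$, but your presentation makes the role of $\alpha\le 3/2$ pleasantly explicit: it is exactly the condition $2-\alpha\ge\alpha-1$ that makes the two contradiction branches cover all admissible $d$. One cosmetic remark: the monotonicity of $p\mapsto(n-p+1)/(i-p)$ is not really needed — the line-8 threshold is a lower bound on $pos$, and $pos\ge p_x$ already gives it; also you should note (as you do implicitly) that in the overlap case $pos>i-len\ge p_x$, which is what guarantees $d\ge 1$.
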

\begin{proof}
Let us show that Algorithm~\ref{alg:dict} verifies $(\star)$. First suppose that the algorithm returned $\mathsf{false}$. Then it broke from the \textbf{for} cycle (line 9); let $z$ be the last suffix processed. The lookup by the key $\Psi(z)$ returned the  position of a factor $x\sim z$, and the condition in line 8 was true. Then the suffix $xyz=u[pos..n]$ violates $(\star)$. Indeed, $x\sim z$ since $x$ was found by the key $\Psi(z)$; the first inequality means that $x$ and $z$ do not overlap in $u$; the second inequality is equivalent to $\frac{|xyz|}{|xy|}\ge\alpha$. 

Now suppose that the algorithm returned $\mathsf{true}$. Aiming at a contradiction, assume that $u$ has a suffix violating $(\star)$. Let $xyz$ $(x\sim z)$ be the shortest such suffix. Consider the iteration of the \textbf{for} cycle, where $z$ was processed. The key $\Psi(z)$ was present in the dictionary because $x\sim z$. If $pos$ (line 7) corresponded to the $x$ from our ``bad'' suffix, i.e., $xyz=u[pos..n]$, then both inequalities in line 8 held because $x$ and $z$ do not overlap in $u$ and $\frac{|xyz|}{|xy|}\ge\alpha$. But then the algorithm would have returned $\mathsf{false}$, contradicting our assumption. Hence $pos$ was the position of some other $x'\sim z$ which occurs in $u$ later than $x$. By the choice of the suffix $xyz$, $u$ cannot have shorter suffix $x'y'z$ with $x'\sim z$. This means that the occurrences of $x'$ and $z$ overlap (see Fig.~\ref{f:closest}). 

\begin{figure}[!htb]
\centering
    \includegraphics[trim = 50 772 330 38, clip]{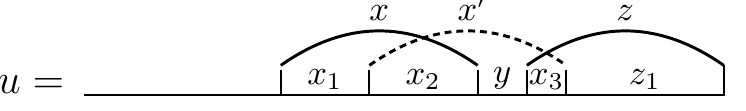}
    \caption{Location of Abelian equivalent factors (Proposition~\ref{p:corr-adict}).}
\label{f:closest}
\end{figure}

Note that $x'$ and $x$ also overlap. Otherwise, $xyz$ has a prefix of the form $x\hat yx'$ and $\frac{|x\hat yx'|}{|x'\hat y|}\ge \frac{|xyz|}{|xy|}\ge \alpha$, contradicting the condition that all proper prefixes of $u$ are $\alpha$-A-free. Then $x=x_1x_2$, $x'=x_2yx_3$, $z=x_3z_1$, as shown in Fig.~\ref{f:closest}, and $x_1,x_2,x_3,z_1$ are nonempty. We observe that $x\sim x'\sim z$ imply $x_1\sim yx_3$ and $x_2y\sim z_1$. By the condition on the prefixes of $u$, $\frac{|x_1x_2yx_3|}{|x_1x_2|}<\alpha\le 3/2$. Hence $|yx_3|<|x_2|$ and then $|x_3|<|x_2y|=|z_1|$. Therefore $\frac{|x_2yx_3z_1|}{|x_2yx_3|}> 3/2\ge \alpha$, so the suffix $x'z_1=x_2yx_3z_1$ of $u$ violates $(\star)$. But $|x'z_1|<|xyz|$, contradicting the choice of $xyz$. This contradiction proves that $u$ satisfies $(\star)$. \qed
\end{proof}

Dictionaries based on hash table techniques such as linear probing or cuckoo hashing guarantee expected constant time per dictionary operation. As Algorithm~\ref{alg:dict} consists of a single cycle with a constant number of operations inside, the following statement is straightforward.

\begin{proposition} \label{p:dictcompl}
For a word of length $n$, Algorithm~\ref{alg:dict} performs $O(n)$ operations, including dictionary operations.
\end{proposition}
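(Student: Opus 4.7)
The plan is straightforward: observe that Algorithm~\ref{alg:dict} consists of a single for-loop and verify that each iteration runs in $O(1)$. First I would count iterations: the loop index $i$ ranges from $n$ down to $1+\lceil n/2 \rceil$, so the number of iterations is at most $\lfloor n/2 \rfloor = O(n)$. Next I would audit the body of the loop. The Parikh vector $\vec P$ is maintained incrementally across iterations: at the start of the iteration that processes $z = u[i..n]$, the vector $\vec P$ already equals $\Psi(u[i{+}1..n])$ from the previous iteration (initially $\vec P = \vec 0 = \Psi(u[n{+}1..n])$), so line 6 updates it to $\Psi(u[i..n])$ by a single coordinate increment. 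Thus line 6 costs $O(1)$, not $O(\sigma)$.

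The remaining operations in the body are: one dictionary lookup by key $\vec P$ (line 7), one retrieval of the last element of the returned list, and one comparison of two simple arithmetic expressions (line 8). Under the hash-based dictionary implementations cited in the paragraph preceding the proposition (linear probing or cuckoo hashing), the lookup takes expected constant time; the tail of the position list is accessible in $O(1)$ if the list is stored with direct access to its last element (e.g., a dynamic array or a list with a tail pointer). All steps are therefore $O(1)$ per iteration, giving the claimed $O(n)$ total. The only point that needs care is justifying the incremental maintenance of $\vec P$ across iterations so that line 6 is genuinely $O(1)$; this is the only step I would flag explicitly, and beyond it there is no real obstacle.
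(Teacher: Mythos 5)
Your proof is correct and follows essentially the same route as the paper: the paper observes (just before the proposition) that hash-based dictionaries give expected $O(1)$ per operation and that the algorithm is a single cycle with a constant number of operations per iteration, and declares the claim straightforward. Your only addition is to make explicit the incremental maintenance of $\vec P$ via line~6, which the paper's brief justification leaves implicit; that observation is accurate and is indeed the one detail worth spelling out.
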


\begin{remark} \label{r:dict32}
A slight modification of Algorithm~\ref{alg:dict} allows one to process the important case $\alpha=(3/2)^+$ within the same complexity bound. The whole argument from the proof of Proposition~\ref{p:corr-adict} remains valid for $\alpha=(3/2)^+$ except for one specific situation: in Fig.~\ref{f:closest} it is possible that $y$ is empty and $|x_1|=|x_2|=|x_3|=|x_4|$. In this situation Algorithm~\ref{alg:dict} misses the Abelian square $xz$. To fix this, we add a patch after line 7:\\[1mm]
7.5: \textbf{if} $pos= i- len/2$ \textbf{then} $pos \gets pos.next$\\[1mm]
As an example, consider $u=abcdbadc$. Processing the suffix $z=badc$, Algorithm~\ref{alg:dict} retrieves $pos=3$ from the dictionary by the key $\Psi(z)$. The corresponding factor $x'=cdba$ overlaps $z$ and the condition in line 8 would fail for $pos$. However, $pos$ satisfies the condition in the inserted line 7.5 and thus the factor $x=abcd$ at $pos=1$ will be reached. The condition in line 8 holds for $pos=1$ and the forbidden Abelian repetition is detected.
\end{remark}
\begin{remark} \label{r:dict2}
Algorithm~\ref{alg:dict} can be further modified to work for all $\alpha<2$. If we replace the patch from Remark~\ref{r:dict32} with the following one:\\[1mm]
7.5: \textbf{while} $pos> i- len$ \textbf{do} $pos \gets pos.next$\\[1mm]
the algorithm will find the closest factor $x\sim z$ which does not overlap with $z$. This new patch introduces an inner cycle and thus affects the time complexity but the algorithm remains faster in practice than Algorithm~\ref{alg:asmall}.
\end{remark}

\subsection{Avoiding big powers} \label{ss:big}
Let $\alpha>2$. The case $2<\alpha<3$ for ternary words and the case $3<\alpha<4$ for binary words are relevant to the studies of Abelian repetition threshold. We provide here the algorithms for the first case; the algorithms for the second case are very similar (the only difference is that one should check for Abelian cubes instead of Abelian squares). An Abelian $\beta$-power with $\alpha\le \beta\le 3$ has the form $ZZ'z$, where $Z\sim Z'$,  $z$ is equivalent to a prefix of $Z$, and $\frac{|ZZ'z|}{|Z|}\ge \alpha$. We can write the Abelian square $ZZ'$ as $xy$ where $|x|\le|y|$ and $x\sim z$. Consequently, if all proper prefixes of a word $u$ are $\alpha$-A-free, then $u$ is $\alpha$-A-free iff the following analog of $(\star)$ holds:
\begin{itemize}
    \item[$(*)$] no suffix of $u$ can be written as $xyz$ such that $|y|\ge |x|>0$, $x\sim z$, $xy$ is an Abelian square, and $\frac{2|xyz|}{|xy|}\ge\alpha$.
\end{itemize}
Verifying $(*)$ for $u$, we proceed for its suffix $z$ as follows. Within the range determined by $\alpha$, we search for all factors $x=u[\lef..right]$ such that $x\sim z$. The search is organized as in Algorithm~\ref{alg:asmall} (see Fig.~\ref{f:jumping}). For each $x$ we consider the corresponding suffix $xyz$ of $u$ and check whether $xy$ is an Abelian square. If yes, $xyz$ violates $(*)$. In Algorithm~\ref{alg:abig} below, we make use of the arrays $c_a,d_a$ and functions $\pari,\cov$ designed for Algorithm~\ref{alg:asmall}.

\begin{algorithm*}[!htb]
\caption{Abelian powers detection (case $2<\alpha<3$)} 
\label{alg:abig}
\begin{algorithmic}[1]
\State{\textbf{function} $\mathsf{ALPHAfree}(u)$ \Comment{$u$=word; $n=|u|$}}
\State{$\free\gets \mathsf{true}$ \Comment{$\alpha$-A-freeness flag}}
\For{$i=n$ downto 1+$\lceil 2n/3\rceil$} \Comment{$z=u[i..n]$}
    \State{$right\gets i-1$}
    \State{$len\gets n-i+1$; $\vec P\gets \pari(i,n)$ \Comment{length and Parikh vector of $z$}}
    \State{$\lef\gets \cov(\vec P, right)$}
    \If{$\lef=0$} break \Comment{$\Psi(u[1..right])\not\ge \Psi(z)$}
    \EndIf
    \While {$\lef\ge \max\{1, \lceil\frac{\alpha i-1-2n}{\alpha-2}\rceil\}$}\Comment{guarantees $\frac{2|xyz|}{|xy|}\ge\alpha$}
        \If{$\lef+len-1=right$} \Comment{$x=u[\lef..right]\sim z$}
            \If{$2\mid(i-\lef)$ \textbf{and} $\sum_{a\in\Sigma} \big|c_a[i{-}1]+c_a[\lef{-}1]-2c_a[\frac{i+\lef}{2}{-}1]\big| = 0$}
                \State{$\free\gets \mathsf{false}$; break}
                \Comment{$xy=u[\lef..i{-}1]$ is an Abelian square}
            \Else \State{$right\gets right -1$}\Comment{right bound for the next search}
            \EndIf
        \Else \State{$right\gets \lef+len-1$}\Comment{right bound for the next search}
        \EndIf
        \State{$\lef\gets \cov(\vec P, right)$}
    \EndWhile
    \If{$\free=\textsf{false}$} break
    \EndIf
\EndFor
\State {return $\free$ \Comment{the answer to ``is $u$ $\alpha$-A-free?''}}\end{algorithmic}
\end{algorithm*}

\begin{proposition} \label{p:corr-abig}
Let $\alpha$ be a number such that $2<\alpha<3$ and $u$ be a word all proper prefixes of which are $\alpha$-A-free. Then Algorithm~\ref{alg:abig} correctly detects whether $u$ is $\alpha$-A-free.
\end{proposition}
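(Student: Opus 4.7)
The plan is to parallel the proof of Proposition~\ref{p:corr-asmall}, with one substantive addition. First I would verify that $\alpha$-A-freeness of $u$ is equivalent to condition $(*)$: any strong Abelian $\beta$-power with $\beta\ge\alpha$ inside $u$ must be a suffix of $u$ (proper prefixes are $\alpha$-A-free) and must have $\beta\le 3$, because $\alpha<3$ already rules out Abelian cubes inside proper prefixes of $u$. For $\beta\in(2,3]$ the correspondence $ZZ'z\leftrightarrow xyz$, with $x$ the length-$|z|$ prefix of $Z$, turns the $\beta$-power into a $(*)$-structure: $x\sim z$, $xy=ZZ'$ is an Abelian square, $|y|=2|Z|-|z|\ge|Z|\ge|z|=|x|$, and $\frac{2|xyz|}{|xy|}=\beta$. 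The outer-loop range $i\ge 1+\lceil 2n/3\rceil$ is exactly the set of admissible suffixes $z$, since $\beta\le 3$ forces $3|z|\le|xyz|\le n$.

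Soundness is by direct inspection of the break at line~11. At that moment $x=u[\lef..right]$ has length $len=|z|$ with $\Psi(x)\ge\Psi(z)$, so $x\sim z$; the admissibility bound on $\lef$ in line~8 is algebraically equivalent to $\frac{2|xyz|}{|xy|}\ge\alpha$; and the parity check together with the sum in line~10 exactly encode that the two halves of $u[\lef..i{-}1]$ have equal Parikh vectors, i.e., $xy$ is an Abelian square. Hence $u[\lef..n]$ is a suffix violating $(*)$.

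For completeness, let $xyz$ be the shortest suffix of $u$ violating $(*)$. The algorithm reaches the iteration processing $z=u[i..n]$: strictly shorter suffixes are $\alpha$-A-free by minimality of $xyz$, so no earlier iteration fires the line~11 break, and the line~7 break cannot fire earlier either, because the monotonicity of the arrays $c_a$ propagates the defect $\Psi(u[1..i'{-}1])\not\ge\Psi(u[i'..n])$ down from any larger $i'$ to $i$, precluding the existence of $x\sim z$. The one genuinely new ingredient, and the main obstacle relative to Proposition~\ref{p:corr-asmall}, is that Algorithm~\ref{alg:abig} cannot stop at the first factor $\sim z$ it finds; only the particular $x$ whose companion $xy$ is an Abelian square triggers a violation, so the inner loop must enumerate \emph{every} candidate end-position in the admissible range. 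The invariant I would maintain across inner-cycle iterations is: after the current pass, for every end-position $p\in(right,i{-}1]$ not yet tested at line~10, no length-$len$ factor of $u$ ending at $p$ is $\sim z$. Both branches preserve it via the maximality property of $\cov$: when $|v|>|z|$, any hypothetical $x'\sim z$ ending in $(\lef+len-1,right]$ would start at a position $>\lef$ inside $u[\lef+1..right]$, contradicting the maximality of $\lef$---this justifies the skip $right\gets\lef+len-1$; when $|v|=|z|$ and the Abelian-square test fails, decrementing $right$ by one preserves the invariant for the just-tested position. The enumeration therefore eventually reaches the true violating $x$, the test at line~10 succeeds, and the algorithm returns $\mathsf{false}$.
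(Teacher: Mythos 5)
Your proof tracks the paper's quite closely. The paper likewise reduces to condition $(*)$, justifies the outer range $i\ge 1+\lceil 2n/3\rceil$ via $|z|\le n/3$, identifies the candidate $x$ as the shortest $\cov$-factor when $|v|=|z|$, and treats the remaining Abelian-square test at line 10 as the new ingredient; for completeness it then simply writes ``the rest is the same as in Algorithm~2.'' Your explicit invariant (no untested end-position in $(right,i{-}1]$ can carry an $x\sim z$, preserved by $\cov$'s maximality in the skip branch and trivially in the decrement branch) is a welcome unpacking of that one-line appeal, and your observation that Algorithm~\ref{alg:abig}, unlike Algorithm~\ref{alg:asmall}, cannot stop at the first $x\sim z$ but must enumerate them all is exactly the point the paper implicitly relies on with the $right\gets right-1$ step. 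So the route and decomposition are the same; yours is more detailed in the completeness direction.

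One place where both you and the paper are thin, and where I would push you to be more careful, is the soundness step. You conclude ``Hence $u[\lef..n]$ is a suffix violating $(*)$,'' but $(*)$ also demands $|y|\ge|x|$, equivalently $\frac{2|xyz|}{|xy|}\le 3$, and the algorithm never tests this. Without it, $x$ need not be a prefix of the first half of the Abelian square $xy$, so $z$ need not be equivalent to a prefix of that half, and $u[\lef..n]$ is not obviously a strong $\beta$-A-power. The paper dispatches this with the remark that the initial value $right=i-1$ ``ensures $|x|\le|y|$,'' which is not literally true — the very first inner-loop pass can return $\lef=i-len$, giving $y=\lambda$. A complete argument needs either to show that, under the hypothesis that all proper prefixes of $u$ are $\alpha$-A-free, a candidate with $|y|<|x|$ can never pass the line-10 test, or to weaken $(*)$ by dropping the inequality and prove directly that the weakened condition still forces a forbidden suffix. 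Your proof, like the paper's, leaves that step implicit.
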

\begin{proof}
Algorithm~\ref{alg:abig} is similar to Algorithm~\ref{alg:asmall}, so we focus on their difference. If some suffix $xyz$ violates $(*)$, then $|z|\le|xyz|/3\le n/3$; hence the range for the outer cycle in line 3. For a fixed $z$ we repeatedly seek for the shortest factor $v=u[\lef..right]$ with the given right bound and the property $\Psi(v)\ge \Psi(z)$. If $|v|=|z|$ (condition in line 9 holds), then $v$ is a candidate for $x$ in the suffix $xyz$ violating $(*)$. The initial value for $right$ (line 4) is set to ensure the condition $|x|\le|y|$. The candidate found in line 9 is  checked in line 10 for the remaining condition: $xy$ is an Abelian square. Namely, we check that $|xy|$ is even and its left and right halves have the same Parikh vector. If this condition holds, the algorithm broke both inner and outer cycles and returns $\mathsf{false}$. If the condition fails, we decrease $right$ by 1 and compute the factor $v$ for this new right bound. The rest is the same as in Algorithm~\ref{alg:asmall}. So we can conclude that Algorithm~\ref{alg:abig} verifies $(*)$. \qed
\end{proof}
\begin{remark} \label{r:time3}
The inner cycle of Algorithm~\ref{alg:abig} works in $O(\sigma)$ time, and so Algorithm~\ref{alg:abig} has the same time complexity $O((K+n)\sigma)$ as
Algorithm~\ref{alg:asmall}. (Recall that $K$ is the total number of iterations of the inner cycle during processing $u$.)
\end{remark}

Algorithm~\ref{alg:abig} is rather slow. But it appears that \textit{dual} Abelian powers can be detected by a much faster Algorithm~\ref{alg:dual}. Let us give the definitions.
As was mentioned in Section~\ref{s:pre}, the reversal of an $\alpha$-A-power for $\alpha>2$ is not necessarily an $\alpha$-power. For example, $u=abc\,bac\,a$ is a $(7/3)$-A-power while $u^R=aca\,bcb\,a$ does not begin with an Abelian square. We call $u$ a \emph{dual $\alpha$-A-power} if $u^R$ is an $\alpha$-A-power; the notion of dual $\alpha$-A-free word is defined by analogy with $\alpha$-A-free word. Dual $\alpha$-A-free words are exactly the reversals of $\alpha$-A-free words. 

Assume that all proper prefixes of a word $u$ are dual $\alpha$-A-free, where $2< \alpha< 3$. Then $u$ is dual $\alpha$-A-free iff the following analog of $(*)$ holds:
\begin{itemize}
    \item[$(\dagger)$] no suffix of $u$ can be written as $xyz$ such that $y\sim z$, $x$ is equivalent to a suffix of $z$, and $\frac{|xyz|}{|z|}\ge\alpha$.
\end{itemize}


\begin{algorithm*}[!htb]
\caption{Dual Abelian powers detection ($2<\alpha< 3$)} 
\label{alg:dual}
\begin{algorithmic}[1]
\State{\textbf{function} $\mathsf{dualALPHAfree}(u)$ \Comment{$u$=word; $n=|u|$}}
\State{$\free\gets \mathsf{true}$ \Comment{$\alpha$-A-freeness flag}}
\State{$i\gets n$}
\While{$i\ge 1+\lceil \frac{\alpha-1}{\alpha}n \rceil$} \Comment{$z=u[i..n]$}
    \State{$len\gets n-i+1$; $\vec P\gets \pari(i,n)$ \Comment{length and Parikh vector of $z$}}
    \State{$\lef\gets \cov(\vec P, i-1)$}\Comment{computing $v$}
    \If{$\lef+len=i$} \Comment{$|v|=|z|\Rightarrow v=y\sim z$}
        \State{$j=\lceil (\alpha-2)\cdot len\rceil$}\Comment{minimal length of $x$}
        \While {$j\le len$}\Comment{possible lengths of $x$}
            \State{$\vec P_1\gets \pari(n{-}j{+}1,n)$ \Comment{Parikh vector of the length-$j$ suffix of $z$}}                \State{$\lef_1\gets \cov(\vec P_1, \lef-1)$}\Comment{computing $v_1$ for $x$}
            \If{$\lef_1+j=\lef$} \Comment{$x$ is found} 
                \State{$\free\gets \mathsf{false}$; break}
            \Else
                \State{$j\gets \lef-\lef_1$}
            \EndIf
        \EndWhile\vspace*{-2pt} 
        \State{$i\gets i-1$}
    \Else
        \State{$i\gets \lceil (n+\lef)/2 \rceil$}
    \EndIf
    \If{$\free=\textsf{false}$} break
    \EndIf
\EndWhile
\State {return $\free$ \Comment{the answer to ``is $u$ dual $\alpha$-A-free?''}}
\end{algorithmic}
\end{algorithm*}

\begin{proposition} \label{p:dual}
Let $\alpha$ be a number such that $2<\alpha<3$ and $u$ be a word all proper prefixes of which are dual $\alpha$-A-free. Then Algorithm~\ref{alg:dual} correctly detects whether $u$ is dual $\alpha$-A-free.
\end{proposition}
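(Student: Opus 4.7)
The plan is to show that Algorithm~\ref{alg:dual} verifies $(\dagger)$, following the same template as Propositions~\ref{p:corr-asmall} and~\ref{p:corr-adict} but with two nested searches---the outer loop picks the suffix $z=u[i..n]$ and locates a candidate $y$, the inner loop hunts for a candidate $x$---and with two ``jumps'' that skip positions which provably cannot carry a $(\dagger)$-witness. For the \emph{only if} direction, I would trace the single point where $\free$ becomes $\mathsf{false}$ (line~13). At that moment the outer loop has $v=u[\lef..i{-}1]$ with $|v|=|z|$ and $\Psi(v)\ge\Psi(z)$, and the inner loop has $v_1=u[\lef_1..\lef{-}1]$ with $|v_1|=j$ and $\Psi(v_1)\ge\vec P_1$, where $\vec P_1$ is the Parikh vector of the length-$j$ suffix of $z$. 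In both cases, length equals the sum of coordinates of the Parikh vector, so the inequalities are in fact equalities: $v\sim z$ and $v_1$ is Abelian equivalent to the length-$j$ suffix of $z$. Setting $x:=v_1,\ y:=v$, the suffix $xyz=u[\lef_1..n]$ satisfies $y\sim z$ and $x$ equivalent to a suffix of $z$, while the bound $j\ge\lceil(\alpha{-}2)|z|\rceil$ imposed in line~8 translates into $|xyz|/|z|\ge\alpha$, so $(\dagger)$ is violated.

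For the \emph{if} direction, assume a suffix $xyz$ of $u$ violates $(\dagger)$ with $z$ starting at position $i_b$. I need to show that the outer loop reaches $i_b$ and then the inner loop lands on $j=|x|$. The main step is justifying the outer jump $i\gets\lceil(n+\lef_0)/2\rceil$ (executed when $|v_0|>|z_0|$): I claim that no $i'$ in the skipped range $(\lceil(n+\lef_0)/2\rceil,\,i_0)$ admits a candidate $y'\sim z'=u[i'..n]$ ending at $i'-1$. Indeed, if such $y'=u[2i'{-}n{-}1..i'{-}1]$ existed, then
\[
\Psi(u[2i'{-}n{-}1..i_0{-}1])=\Psi(y')+\Psi(u[i'..i_0{-}1])=\Psi(z_0)+2\Psi(u[i'..i_0{-}1])\ge\Psi(z_0),
\]
while $i'>\lceil(n+\lef_0)/2\rceil$ gives $2i'{-}n{-}1>\lef_0$, contradicting the maximality of $\lef_0$ returned by $\cov$. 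Consequently no $(\dagger)$-witness is skipped; at the critical iteration $i=i_b$, the true $y=u[i_b{-}|z|..i_b{-}1]$ is a length-$|z|$ candidate with $\Psi(y)=\Psi(z)$, forcing $|v|=|z|$ and sending control into the inner loop.

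The inner jump $j\gets|v_1|$ is justified by the same principle. If $|v_1|>j$ and some $j'\in(j,|v_1|)$ could still yield a valid $x$ of length $j'$, then the length-$j'$ factor $u[\lef{-}j'..\lef{-}1]$ would have Parikh vector equal to that of the length-$j'$ suffix of $z$, which dominates $\vec P_1$ coordinate-wise (the length-$j'$ suffix extends the length-$j$ one on the left). But $j'<|v_1|$ means this factor is strictly shorter than $v_1$, contradicting the minimality of $v_1$ with respect to $\vec P_1$. Hence the inner loop visits exactly the lengths at which a valid $x$ may live, and in particular reaches $j=|x|$ and returns $\mathsf{false}$. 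The hard part is justifying the two jumps simultaneously, but as sketched they are both instances of a single observation: a hypothetical witness in a skipped region would force a Parikh-dominating factor strictly shorter than the one produced by $\cov$, contradicting its definition. Everything else reduces to routine bookkeeping of lengths, ranges, and Parikh sums.
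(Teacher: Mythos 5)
Your proof is correct and follows the same scheme as the paper's: scan suffixes $z$ from shortest to longest, use $\cov$ to locate candidate $y$ and then $x$, and justify the two jumps by the minimality of the factor returned by $\cov$. You supply the Parikh-dominance calculation behind each jump explicitly (the paper merely asserts that no Abelian square ending is skipped), so your write-up is a more detailed version of the same argument rather than a different route.
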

\begin{proof}
If some suffix $xyz$ violates $(\dagger)$, then $|z|\le|xyz|/\alpha\le n/\alpha$; hence the range for the outer cycle in line 3. The general scheme is as follows. For each processed suffix $z$, the algorithm first checks if $u$ ends with an Abelian square $yz$ ($y\sim z$); if yes, it checks whether $yz$ is preceded by some $x$ which is equivalent to a suffix of $z$. If such an $x$ is found, the algorithm detects a violation of $(\dagger)$ and stops. If either $x$ or $y$ is not found, the algorithm moves to the next appropriate suffix. Let us consider the details.

In line 6, the shortest $v=u[\lef..i{-}1]$ such that $vz$ is a suffix of $u$ and $\Psi(v)\ge \Psi(z)$ is computed. If $|v|=|z|$ (the condition in line 7), then $y=v$ is found and we enter the inner cycle to find $x$. If $|v|>|x|$, we note that the suffixes of $u$ of lengths between $2|z|$ and $|vz|-1$ cannot be Abelian squares; then the next suffix to be considered has the length $\lceil \frac{|vz|}{2}\rceil$, as is set in line 18. In the inner cycle, a similar idea is implemented: for each processed suffix $z_1$ of $z$ the algorithm finds the shortest word $v_1=u[\lef_1..\lef-1]$ satisfying $\Psi(v_1)\ge \Psi(z_1)$ (line 11). If $|v_1|=|z_1|$ (line 12), then $x$ is found; otherwise, the next suffix of $z$ to be checked is of length $|v_1|$ (line 15). The inner cycle breaks if this length exceeds the length of $z$. 

We have shown that Algorithm~\ref{alg:dual} stops with the answer $\mathsf{false}$ if it finds a suffix $xyz$ of $u$ that violates $(\dagger)$; if it finishes the check of the suffix $z$ without breaking or skips this suffix at all, then $u$ has no suffix $xyz$, violating $(\dagger)$. Therefore, the algorithm verifies $(\dagger)$. \qed
\end{proof}

Algorithm~\ref{alg:dual} works extremely fast compared to other algorithms from this section. The following statement holds for the case of the ternary alphabet.
\begin{proposition} \label{p:dualcompl}
For a word $u$ picked up uniformly at random from the set $\{0,1,2\}^n$, Algorithm~\ref{alg:dual} works in $\Theta(\sqrt{n})$ expected time.
\end{proposition}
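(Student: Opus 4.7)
The plan is to bound separately the expected numbers of iterations of the outer \textbf{while} loop (line~4) and of the inner \textbf{while} loop (line~9). Since each primitive operation---any call to $\pari$ or $\cov$, any update of $c_a$, $d_a$, or $\vec P$---runs in $O(\sigma)=O(1)$ time for $\sigma=3$, bounding these two iteration counts suffices.

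For the outer loop, let $k_j = n-i+1 = |z|$ at the start of the $j$-th outer iteration, and let $T$ be the total number of outer iterations. When the test in line~7 fails (no Abelian square $yz$), line~19 gives $k_{j+1}-k_j=\lfloor(|v|-|z|+1)/2\rfloor$. Lemma~\ref{l:prob} applied to the reversed random word yields $E[\,|v|-|z|\mid|z|=k\,]=\Theta(\sqrt{k})$. The plan is then to use the potential $\Phi(k)=\sqrt{k}$ and argue that each no-square step raises $\Phi$ by $\Omega(1)$ on expectation, so the expected number of such steps needed to reach $k=n/\alpha$ is $O(\sqrt{n})$. For the matching lower bound, the uniform bound $E[k_{j+1}-k_j\mid k_j]\le C\sqrt{n}$ together with the deterministic requirement $k_T\ge n/\alpha$ gives, via linearity of expectation, $E[T]\ge(n/\alpha-1)/(C\sqrt{n})=\Omega(\sqrt{n})$. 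For outer iterations where line~7 succeeds, $k$ grows by only $1$ (line~17); each such success requires an Abelian square of half-length $k_j$ ending at position $n$, which in a uniformly random ternary word has probability $O(1/\sqrt{k})$ by the 1D local limit theorem applied to the count of a single letter. Summing $\sum_{k\le n/\alpha}O(1/\sqrt{k})=O(\sqrt{n})$ bounds their expected number, so $E[T]=\Theta(\sqrt{n})$.

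For the inner loop, entered at an outer iteration with $|z|=k$, an analogous analysis applied to the decreasing sequence of suffix lengths $j$ inside shows that the expected number of inner iterations is $O(\sqrt{k})$. Combined with the $O(1/\sqrt{k})$ probability that it is entered at all, the expected inner work contributed by an outer iteration in state $k$ is $O(1/\sqrt{k})\cdot O(\sqrt{k})=O(1)$, and summing over the $O(\sqrt{n})$ outer iterations gives $O(\sqrt{n})$ total expected inner work.

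The main obstacle is the dependence between ``suffix-length $k$ is visited by the outer loop'' and ``an Abelian square of half-length $k$ ends at position $n$'', both of which are functions of the same random word $u$. I would resolve this either by conditioning on the algorithm's trajectory and observing that the Parikh-vector collisions relevant at later iterations depend on coordinates of $u$ not used in earlier accepted/rejected decisions, or, more concretely, by bounding the expected total work via linearity of expectation summed over all deterministic trajectories the algorithm could follow. The two probabilistic ingredients---the $\Theta(\sqrt{k})$ gap expectation and the $O(1/\sqrt{k})$ collision probability---are standard CLT calculations; the real substance lies in the amortization bookkeeping just described.
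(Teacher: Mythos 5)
Your plan reproduces the paper's argument in its essential structure: use Lemma~\ref{l:prob} to argue that the outer loop advances by $\Theta(\sqrt{|z|})$ per step and hence runs $\Theta(\sqrt{n})$ times; pay for the inner loop by multiplying its conditional $O(\sqrt{|z|})$ cost by the probability that it is entered (i.e., that a random Abelian square ends at $n$); conclude $O(1)$ expected work per outer iteration. There are two places where you diverge from the paper, both worth noting. First, you bound the entry probability by $O(1/\sqrt{k})$ via a one-dimensional local limit theorem on a single letter's count; the paper instead bounds it by $\max_{k_1,k_2,k_3}\binom{\ell}{k_1,k_2,k_3}/3^\ell = \Theta(1/\ell)$, which is the correct order for a ternary Parikh collision (a 2-dimensional constraint). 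Your looser $O(1/\sqrt{k})$ still yields $O(1/\sqrt{k})\cdot O(\sqrt{k})=O(1)$ per iteration, so it suffices for the informal ``per-iteration'' accounting; but it would \emph{not} suffice for your own proposed clean-up via linearity of expectation over all $\ell$, since $\sum_\ell O(1/\sqrt{\ell})\cdot O(\sqrt{\ell})=O(n)$, whereas the paper's $\Theta(1/\ell)$ gives $\sum_\ell O(1/\sqrt{\ell})=O(\sqrt{n})$. So if you want the amortized/unconditional version, you need the sharper two-dimensional collision probability. Second, you are more explicit than the paper about the lower bound and about the dependence between ``$\ell$ is visited'' and ``an Abelian square of half-length $\ell$ ends at $n$''; the paper simply multiplies $E[T]=O(\sqrt n)$ by the $O(1)$ per-iteration cost without addressing that dependence. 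Your concern is legitimate, but your proposed fix via ``disjoint coordinates'' is not automatic: the $\cov$ outcome that caused $\ell$ to be visited reveals information about $u$ precisely in the window $[\,n-2\ell+1,\,n\,]$ where the Abelian-square test at $\ell$ lives, so the two events are not functions of disjoint blocks of $u$. Finally, your claimed uniform bound $E[k_{j+1}-k_j\mid k_j]\le C\sqrt n$ is not a consequence of Lemma~\ref{l:prob}, which only states the $\Omega(\sqrt{|z|})$ direction; you would need a matching upper tail bound on the first-passage overshoot (standard, but it has to be stated).
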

\begin{proof}
Lemma~\ref{l:prob} says that the expected length of the word $v$ found in line 6 is $|z|+\Omega(\sqrt{|z|})$ and thus, on expectation, the assignment in line 18 leads to skipping $\Omega(\sqrt{|z|})$ suffixes of $u$. Hence the expected total number of processed suffixes of $u$ is $O(\sqrt{n})$. By the same lemma, the inner cycle for a suffix $z$ runs, on expectation, $O(\sqrt{|z|})$ iterations, so its expected time complexity is  $O(\sqrt{|z|})$. Thus, processing the suffix of length $\ell$, Algorithm~\ref{alg:dual} performs $O(1)+p_\ell\cdot O(\sqrt{\ell})$ operations, where $p_\ell$ is the probability to enter the inner cycle, i.e., the probability that two random ternary words of length $\ell$ are Abelian equivalent. Let us estimate this probability. One has 
$$
p_\ell \le \max_{k_1,k_2,k_3}\binom{\ell}{k_1,k_2,k_3}
\Big/ 3^\ell,
$$
where the denominator is the number of ternary words of length $\ell$ and the numerator is the maximum size of a class of Abelian equivalent ternary words of length $\ell$. This maximum, reached for (almost) equal $k_1,k_2,k_3$, can be estimated by the Stirling formula to $\Theta(3^\ell/\ell)$. Thus $p_\ell=O(1/\ell)$. Then Algorithm~\ref{alg:dual} performs, on expectation, $O(1)$ operations per iteration of the outer cycle. The result now follows. \qed
\end{proof}

\section{Experimental results} \label{s:exhaust}

We ran a big series of experiments for $\alpha$-A-free languages over the alphabets of size $2,3,\ldots,10$. Each of the experiments is a set of random walks in the prefix tree of a given language. Each walk follows the random depth-first search (Algorithm~\ref{alg:dfs}), with the number $N$ of visited nodes being  of order $10^5$ to $10^7$. The ultimate aim of every experiment was to make a well-grounded conjecture about the (in)finiteness of the studied language.

Our initial expectation was that random walks will demonstrate two easily distinguishable types of behaviour:
\begin{itemize}
    \item \emph{infinite-like}: the level of the current node is (almost) proportional to the number of nodes visited, or
    \item \emph{finite-like}: from some point, the level of the current node oscillates near the maximum reached earlier.
\end{itemize}
However, the situation is not that straightforward: very long oscillations of level were detected during random walks even in some languages which are \emph{known} to be infinite; for example, in the binary 4-A-free language. To overcome such an unwanted behaviour, we endowed Algorithm~\ref{alg:dfs} with a ``forced backtrack'' rule:
\begin{itemize}
    \item let $\ml=k$ be the maximum level of a node reached so far; if $f(k)$ nodes were visited since the last update of $\ml$ or since the last forced backtrack, then make a forced backtrack: from the current node, move $g(k)$ edges up the tree and continue the search from the node reached.
\end{itemize}
Here $f(k)$ and $g(k)$ are some heuristically chosen monotone functions; we used $f(k)=\lceil k^{3/2}\rceil$ and $g(k)=\lceil k^{1/2}\rceil$. Forced backtracking deletes the last $g(k)$ letters of the current word in order to get out of a big finite subtree the search was supposed to traverse. The use of forced backtracking allowed us to classify the walks in almost all studied languages either as infinite-like or as finite-like. The results presented below are grouped by the alphabets.

\subsection{Alphabets with $6,7,8,9$, and 10 letters}

In \cite{SaSh12}, it was proved (Theorem~3.1) that $\ART(k)\ge \frac{k-2}{k-3}$ for all $k\ge 5$ and conjectured that the equality holds in all cases. However, the random search reveals a different picture. For each of the languages $\AF(k,\frac{k-2}{k-3}^+)$, $k=6,7,8,9,10$, we ran random search with forced backtrack, using Algorithm~\ref{alg:dict} to decide the membership in the language; the search terminated when $N$ nodes were visited. We repeated the search 100 times with with $N=10^6$ and another 100 times with $N=2\cdot 10^6$. The results, presented in columns 3--8 of Table~\ref{tab:678910}, clearly demonstrate finite-like behaviour of random walks. Moreover, the results suggest that neither of these languages contains a word much longer than 100 symbols. So we were able to prove the following result by (optimized) exhaustive search.

\begin{theorem} \label{t:678910}
One has $\ART(k)>\frac{k-2}{k-3}$ for $k=6,7,8,9,10$.
\end{theorem}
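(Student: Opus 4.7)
The plan is to show that for each $k\in\{6,7,8,9,10\}$ and $\alpha=\bigl(\tfrac{k-2}{k-3}\bigr)^{+}$, the prefix tree $\cT_{k,\alpha}$ of $\AF(k,\alpha)$ is finite; this immediately yields $\ART(k)>\tfrac{k-2}{k-3}$, since the infimum defining $\ART(k)$ is then realised strictly above $\tfrac{k-2}{k-3}$. Conceptually, I would convert the empirical observation from the random-search experiments---that no walk reaches length much beyond $100$ and that the behaviour is distinctly finite-like---into a deterministic certificate by running a complete depth-first enumeration of $\cT_{k,\alpha}$ and checking that every branch terminates.

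To make the enumeration feasible, I would use two main optimisations. Since $\AF(k,\alpha)$ is closed under alphabet permutations, only \emph{lexmin} words need to be generated, which prunes the search by a factor of about $k!$. For the membership test at each extension step I would use Algorithm~\ref{alg:dict}: for every $k\in\{6,\dots,10\}$ one has $\alpha\le 4/3\le 3/2$, so the algorithm applies, with the non-strict inequality in line~8 replaced by its strict counterpart to detect Abelian powers of ratio $>\alpha$ rather than $\ge\alpha$ (analogous, in spirit, to the patch of Remark~\ref{r:dict32}). When the Parikh-vector dictionary is maintained incrementally during both descent and backtrack, the per-node cost of the search is effectively constant (Proposition~\ref{p:dictcompl}).

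The overall procedure is then a deterministic variant of Algorithm~\ref{alg:dfs} that visits \emph{all} lexmin children of every node. If this DFS terminates, the visited words form the complete set of lexmin elements of $\AF(k,\alpha)$, and together with their at most $k!$ alphabet permutations give a finite total count, proving $\ART(k)>\tfrac{k-2}{k-3}$. The principal obstacle is purely computational: one must verify that, for each of the five alphabets, the search actually completes within a reasonable time and memory budget. The random walks suggest that the deepest lexmin words have length near $100$, but the number of lexmin words below that depth could a priori still be large; the case $k=6$ with $\alpha=4/3$ is the largest of the five $\alpha$ values and therefore the most demanding in both branching factor and in the length to which repetitions are permitted. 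A sensible safeguard is to impose hard depth and time bounds well above the empirical maxima observed in the random walks, and to cross-verify the enumeration using an independent implementation (for instance, running Algorithm~\ref{alg:asmall} as a consistency check on the longest witnesses produced).
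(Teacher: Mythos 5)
Your overall strategy---reduce to an exhaustive DFS of the prefix tree of $\AF(k,\alpha)$ with $\alpha=\bigl(\tfrac{k-2}{k-3}\bigr)^{+}$, using the lexmin normalisation to cut the search by a factor of roughly $k!$ and Algorithm~\ref{alg:dict} as the membership oracle---is exactly what the paper does for $k=6,7,8,9$ (see Remark~\ref{r:exhaust}). For those four alphabets your plan would succeed.

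The gap is $k=10$. There the paper explicitly reports that a single lexmin enumeration is \emph{not} feasible, and instead relies on Lemma~\ref{l:search}, which you do not have. That lemma shows that $\AF(k,\alpha)$ is finite iff three much smaller sublanguages $L_1,L_2,L_3$ are finite: $L_3$ pins down the prefix to the canonical $k$-permutation $01\cdots(k{-}1)$, while $L_1$ and $L_2$ additionally forbid $(k{-}1)$- and $k$-permutations respectively, so that each of the three trees is dramatically smaller than the full lexmin tree. The argument behind the lemma exploits the structural fact that in any word of $\AF(k,\alpha)$ every factor of length $k-3$ is a permutation (because $aua$ is an Abelian $\tfrac{|u|+2}{|u|+1}$-power), which forces a $k$-permutation to appear within a bounded distance from the start of any sufficiently long word. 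Without this decomposition, the $k=10$ computation (which the paper reports as over 500 billion visited nodes and roughly 2000 single-core hours \emph{with} the lemma) is out of reach, and your proposal does not establish the theorem for $k=10$.

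A secondary point: your guess that $k=6$ is the hardest case is the opposite of what the data shows. The branching factor $k$ dominates the modest decrease in allowed repetition length, so the node count grows sharply with $k$ (about $6.3$ billion for $k=6$ versus over $500$ billion for $k=10$). This is precisely why an extra reduction such as Lemma~\ref{l:search} becomes indispensable at $k=10$. Your suggestion to cross-check with Algorithm~\ref{alg:asmall} and to adjust the inequality in line~8 of Algorithm~\ref{alg:dict} for the $\alpha^{+}$ convention are both sensible and consistent with the paper's setup.
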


\begin{table}[!bht]
    \centering
    \begin{tabular}{|c|c|c|c|c|c|c|c|c|}
    \hline
    \multirow{2}{1.4cm}{Alphabet size} & \multirow{2}{1.2cm}{Avoided power} & 
    \multicolumn{3}{c|}{$N=10^6$} & \multicolumn{3}{c|}{$N=2\cdot 10^6$}& 
    \multirow{2}{1.5cm}{Maximum length}\\
    \cline{3-8}
    && $\ml_{max}$ & $\ml_{av}$ & $\ml_{med}$  
    & $\ml_{max}$ & $\ml_{av}$ & $\ml_{med}$& \\
    \hline
    6 & $(4/3)^+$& 112&98.9&98& 114&101.1&101&116\\
    7 & $(5/4)^+$& 116&100.3&100& 124&103.9&102&125\\
    8 & $(6/5)^+$& 103&94.8&95& 102&96.2&96&105\\
    9 & $(7/6)^+$& 108&95.6&96& 107&98.8&99&117\\
    10 & $(8/7)^+$& 121&107.7&108& 128&111.6&111&148*\\
    \hline 
    \end{tabular}
    \vspace*{2mm}\caption{Maximum levels $\ml$ reached by random walks in some Abelian power-free languages. Columns 3--5 (resp. 6--8) show the maximum, average, and median values of $\ml$ among 100 random walks visiting $N=10^6$ (resp., $N=2\cdot 10^6$) nodes each. Column~9 shows the length of a longest word in the language, found by exhaustive search.} 
    \label{tab:678910}
\end{table}

A length-$n$ word is called $n$-permutation if all its letters are pairwise distinct. We use the following lemma to reduce the search space.

\begin{lemma} \label{l:search}
Let $k\ge 6$, $\alpha={\frac{k-2}{k-3}}^+$, and let $L_1,L_2$, and $L_3$ be subsets of $L=\AF(k,\alpha)$ defined as follows:\\
- $L_1$ is the set of all $w\in L$ such that $w$ has the prefix $01\cdots(k{-}3)$ and contains no $(k-1)$-permutations;\\
- $L_2$ is the set of all $w\in L$ such that $w$ has the prefix $01\cdots(k{-}2)$ and contains no $k$-permutations;\\
- $L_3$ is the set of all $w\in L$ having the prefix $01\cdots(k{-}1)$.\\
Then $L$ is finite iff each of $L_1,L_2$, and $L_3$ is finite.
\end{lemma}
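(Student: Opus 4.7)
The forward direction is immediate: each $L_i\subseteq L$, so if $L$ is finite then so is each $L_i$. For the converse I would argue the contrapositive. Assume $L$ is infinite; since the prefix tree $\cT_L$ is finitely branching, K\"onig's lemma yields an infinite word $\mathbf{w}$ all of whose finite prefixes (and hence all factors) lie in $L$.

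The technical heart of the argument is to show that $\mathbf{w}$ must contain a $(k{-}2)$-permutation as a factor. Suppose not; then every length-$(k{-}2)$ window $\mathbf{w}[i..i{+}k{-}3]$ contains two positions $p<q$ with $\mathbf{w}[p]=\mathbf{w}[q]$. Setting $x=z=\mathbf{w}[p]$ and $y=\mathbf{w}[p{+}1..q{-}1]$ presents $\mathbf{w}[p..q]$ as an Abelian $\tfrac{q-p+1}{q-p}$-power, and a direct calculation shows that this exponent exceeds $\alpha=\tfrac{k-2}{k-3}$ precisely when $q-p\le k{-}4$. Since $\mathbf{w}$ contains no forbidden Abelian factor and $q-p\le k{-}3$ inside the window, we are forced to $q-p=k{-}3$, i.e., the two equal letters sit at the extremes of the window. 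Running this over all $i$ yields $\mathbf{w}[i]=\mathbf{w}[i{+}k{-}3]$ for every $i$, so $\mathbf{w}$ has period $k{-}3$. Writing $u=\mathbf{w}[1..k{-}3]$, the prefix $\mathbf{w}[1..k{-}1]=u\cdot u[1..2]$ is an Abelian $\tfrac{k-1}{k-3}$-power via $x=z=u[1..2]$, $y=u[3..k{-}3]$; since $\tfrac{k-1}{k-3}>\alpha$, this contradicts the $\alpha$-A-freeness of $\mathbf{w}$.

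With the $(k{-}2)$-permutation established, let $j\in\{k{-}2,k{-}1,k\}$ be the largest integer for which $\mathbf{w}$ contains a $j$-permutation. Pick the first such occurrence, take the suffix of $\mathbf{w}$ starting there, and apply the alphabet permutation sending its first $j$ letters to $0,1,\ldots,j{-}1$ in order. Since $L$ is factorial and closed under alphabet permutations, the result is still an infinite word in $L$; all of its prefixes of length $\ge j$ begin with $01\cdots(j{-}1)$, and the property ``contains no $(j{+}1)$-permutation'' is inherited from $\mathbf{w}$ (alphabet permutations send permutation-factors to permutation-factors, and shortening to a prefix cannot create new ones). Hence these prefixes populate $L_3$ if $j=k$, $L_2$ if $j=k{-}1$, or $L_1$ if $j=k{-}2$, and in each case the corresponding $L_i$ is infinite. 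The only nontrivial step is the periodicity argument forcing a $(k{-}2)$-permutation; everything else is bookkeeping with factoriality and permutation-closure of $L$.
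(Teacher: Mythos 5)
Your proof is correct and takes a genuinely different route from the paper's. The paper argues contrapositively but \emph{quantitatively}: letting $\ell_1,\ell_2,\ell_3$ be the maximal lengths in $L_1,L_2,L_3$, it shows directly that every $w\in L$ has $|w|\le 4-2k+\ell_1+\ell_2+\ell_3$. To this end it establishes $(\diamond)$ (every length-$(k{-}3)$ factor is a permutation), decomposes each length-$(k{-}1)$ factor as $abu'cd$, rules out $a=c,\,b=d$ via the Abelian $\tfrac{k-1}{k-3}$-power $abu'ab$, and concludes that $w$ has a $(k{-}2)$-permutation starting at position $1$ or $2$; it then chains through $L_1\to L_2\to L_3$, using each finiteness bound to push the next larger permutation forward by a bounded amount. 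You instead pass to an infinite word via K\"onig's lemma, and rather than localizing a $(k{-}2)$-permutation near the start, you show that if none exists anywhere then the word must be $(k{-}3)$-periodic, at which point the same Abelian $\tfrac{k-1}{k-3}$-power (here read off the prefix $u\cdot u[1..2]$) gives the contradiction. Your ``take the largest $j\in\{k{-}2,k{-}1,k\}$ admitting a $j$-permutation'' step then replaces the paper's three-stage chaining with a single case split. Both proofs hinge on the same two facts --- equal letters at distance $<k{-}3$ are forbidden, and $abu'ab$ of length $k{-}1$ is a forbidden Abelian power --- but deploy them differently: the paper's version is more elementary (no appeal to K\"onig's lemma) and yields an explicit length bound useful for the subsequent exhaustive search, while yours is shorter and conceptually cleaner, at the cost of being purely existential.
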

\begin{proof}
The necessity is immediate from definitions; let us prove sufficiency. Let $w\in L$ and let $\ell_1,\ell_2,\ell_3$ be the lengths of the longest words in $L_1,L_2$, and $L_3$ respectively.
Let us show that $|w|<\ell_1+\ell_2+\ell_3$. If $u$ is a word and $a$ is a letter, then $aua$ is an Abelian $\frac{|u|+2}{|u|+1}$-power. Hence 
\begin{itemize}
    \item[$(\diamond)$] any factor of $w$  of length $k-3$ is a $(k-3)$-permutation.
\end{itemize} 
Now consider a factor $u$ of $w$ such that $|u|=k-1$. By $(\diamond)$, one can write $u=abu'cd$, where $u'$ does not contain the letters $a,b,c,d$ and $b\ne c$. If $a=c$ and $b=d$, then $u$ is an Abelian $\frac{k-1}{k-3}$-power, which is impossible since $u\in L$ as a factor of $w$. Hence $u$ either begins or ends with a $(k-2)$-permutation. Thus $w$ contains a $(k-2)$-permutation beginning at the first or second position. Then $w$ contains a $(k-1)$-permutation due to finiteness of $L_1$; moreover, this permutation ends no later than at position $2+\ell_1$ and thus begins no later than at position $4-k+\ell_1$. Similarly, due to finiteness of $L_2$, $w$ contains a $k$-permutation no later than at position $5-2k+\ell_1+\ell_2$. Finally, the finiteness of $L_3$ implies the upper bound $|w|\le 4-2k+\ell_1+\ell_2+\ell_3$. In particular, $L$ is finite.\qed
\end{proof}

We ran (non-randomized) depth-first search on the prefix trees of the languages $L_1,L_2$, and $L_3$ for the cases $k=6,7,8,9,10$, using Algorithm~\ref{alg:dict} to detect Abelian powers, and proved that all these trees are finite. According to Lemma~\ref{l:search}, this proves Theorem~\ref{t:678910}. The total number of visited nodes was approximately 0.43 billions for $k=8$; 0.90 billions for $k=7$; 6.29 billions for $k=6$; 8.14 billions for $k=9$; more than 500 billions for $k=10$. The last case required about 2000 hours of processing time (single-core) by an ordinary laptop.

\begin{remark} \label{r:exhaust}
For each $k=6,7,8,9$ it is feasible to run a single search which enumerates all lexmin words in the language $\AF(k,\frac{k-2}{k-3}^+)$. We performed these searches and found the maximum length of a word in each language (the last column of Table~\ref{tab:678910}) and the distribution of words by their length (see Fig.~\ref{f:8ary} for the case $k=8$). For $k=10$, such a single search would require too much resources; here the value in the last column of Table~\ref{tab:678910} is the length of the longest word in $L_1\cup L_2\cup L_3$.
\end{remark}

\begin{figure}[!htb]
\centering
    \includegraphics[scale=0.69]{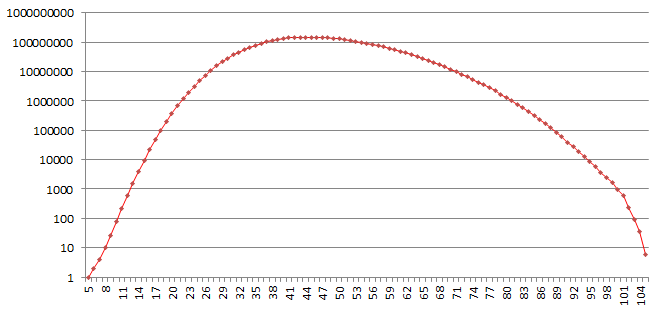}
    \caption{Distribution of the number of lexmin words by length in the language $\AF(8,\frac{6}{5}^+)$ (logarithmic scale).}
    \label{f:8ary}
\end{figure}

Theorem~\ref{t:678910} raises the question of avoidance of bigger $\alpha$-A-powers over the same alphabets. As the next step, we ran experiments for the languages $\AF(k,\frac{k-3}{k-4})$. The results for $k=7,8,9,10$ are presented in Table~\ref{tab:678910_}; random walks in these languages clearly demonstrate finite-type behaviour, while proving finiteness by exhaustive search looks hardly possible. On the contrary, the walks in the 6-ary language $\AF(6,\frac{3}{2})$ demonstrate an infinite-like behaviour: the average value of $\ml$ for our experiments with $N=10^5$ is greater than $5\cdot 10^4$. We note that the obtained words are too long to use Algorithm~\ref{alg:dict}, so we had to use slower Algorithm~\ref{alg:asmall}. Finally, we constructed random walks for the languages $\AF(k,\frac{k-3}{k-4}^+)$ ($k=7,8,9,10$). They also demonstrate infinite-like behaviour. The obtained experimental results allow us to state the part of Conjecture~\ref{c:art} for the alphabets with 6 and more letters.

\begin{table}[!htb]
    \centering
    \begin{tabular}{|c|c|c|c|c|c|c|c|}
    \hline
    \multirow{2}{1.4cm}{Alphabet size} & \multirow{2}{1.2cm}{Avoided power} & 
    \multicolumn{3}{c|}{$N=10^6$} & \multicolumn{3}{c|}{$N=2\cdot 10^6$}\\
    \cline{3-8}
    && $\ml_{max}$ & $\ml_{av}$ & $\ml_{med}$  
    & $\ml_{max}$ & $\ml_{av}$ & $\ml_{med}$ \\
    \hline
    7 & $4/3$& 510&374.5&371& 510&397.5&394\\
    8 & $5/4$& 211&179.7&179& 223&185.0&184\\
    9 & $6/5$& 192&157.2&156& 191&162.3&161\\
    10 & $7/6$& 175&154.0&154& 187&159.7&158\\
    \hline 
    \end{tabular}
    \vspace*{2mm}\caption{Maximum levels $\ml$ reached by random walks in some Abelian power-free languages. Columns 3--5 (resp. 6--8) show the maximum, average, and median values of $\ml$ among 100 random walks visiting $N=10^6$ (resp., $N=2\cdot 10^6$) nodes each.}
    \label{tab:678910_}
\end{table}

\subsection{Alphabets with $2,3,4$, and 5 letters}

Random walks in the prefix tree of the language $\AF(5,\frac{3}{2}^+)$ demonstrate the infinite-like behaviour; Fig.~\ref{f:5ary} shows an example of dependence of the level of the current node on the number of nodes visited. Although we could not push random walks significantly farther down the tree (Algorithm~\ref{alg:dict} uses too much space to work on such big levels, so we relied on slow Algorithm~\ref{alg:asmall}), we obtained sufficient evidence to support Conjecture~\ref{c:artold} for $k=5$.

\begin{figure}[!htb]
\centering
    \includegraphics[scale=0.8]{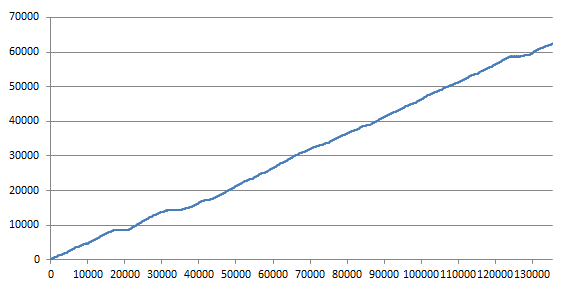}
    \caption{An infinite-like random walk in the language $\AF(5,\frac{3}{2}^+)$: a point $(n,m)$ of the graph means that the $n$th node visited by the walk has depth $m$.}
    \label{f:5ary}
\end{figure}

\begin{remark} \label{r:5arygrowth}
As the language $\AF(5,\frac{3}{2}^+)$ is supposed to be infinite, it is interesting to estimate its growth. Based on the technique described in \cite{PeSh21dlt}, we estimate the number of words of length $n$ in $\AF(5,\frac{3}{2}^+)$ as growing exponentially with $n$ at the rate close to 1.5. The upper bound 2.335 \cite{SaSh12} on this rate is thus very loose.
\end{remark}

For smaller alphabets, we studied the languages $\AF(4,\frac{9}{5}^+)$, $\AF(3,2^+)$, and $\AF(2,\frac{11}{3}^+)$, indicated by Conjecture~\ref{c:artold} as infinite. To detect Abelian powers, we used Algorithm~\ref{alg:dict} for the quaternary alphabet; for the ternary and binary alphabets, we worked with the reversals of $\AF(3,2^+)$, and $\AF(2,\frac{11}{3}^+)$ to benefit from the speed of Algorithm~\ref{alg:dual} detecting \emph{dual} Abelian powers. Random walks (with forced backtracking) in each of three languages show the finite-like behaviour; see Table~\ref{tab:234} and the example in Fig.~\ref{f:4ary}. Longer random searches lead to somewhat better results, especially on average, due to multiple forced backtracks; however, nothing resembles a steady growth of maximum level with the total number of visited nodes. So the experimental results justify lower bounds from Conjecture~\ref{c:art} for $k=2,3,4$. To get the upper bound for the ternary alphabet, we ran random walks for the language $\AF(3,\frac{5}{2}^+)$ with the results similar to those obtained for $\AF(5,\frac{3}{2}^+)$: all walks demonstrate the infinite-like behaviour; the level  $\ml=10^5$ is reached within minutes.

Overall, we conclude that the experiments we conducted justify the formulation of Conjecture~\ref{c:art}.


\begin{table}[!htb]
    \tabcolsep=1.8pt
    \centering
    \begin{tabular}{|c|c|c|c|c|c|c|c|c|c|c|}
    \hline
    \multirow{2}{1.32cm}{Alphabet size} & \multirow{2}{1.2cm}{Avoided power} & 
    \multicolumn{3}{c|}{$N=10^6$} & \multicolumn{3}{c|}{$N=2\cdot 10^6$} &
    \multicolumn{3}{c|}{$N=10^7$}\\
    \cline{3-11}
    && $\ml_{max}$ & $\ml_{av}$ & $\ml_{med}$  
    & $\ml_{max}$ & $\ml_{av}$ & $\ml_{med}$  
    & $\ml_{max}$ & $\ml_{av}$ & $\ml_{med}$\\
    \hline
    2 & $(11/3)^+$& 775&435.8&416& 706&477.0&453 &759&589.7&588 \\
    3 & $2^+$& 3344&1700.0&1671& 5363&2228.8&2140& 5449&3078.1&3148\\
    4 & $(9/5)^+$& 1367&861.2&835& 1734&986.8&956& 2453&1414.7&1369\\
    \hline 
    \end{tabular}
    \vspace*{2mm}\caption{Maximum levels $\ml$ reached by random walks in some Abelian power-free languages. Columns 3--5 (resp. 6--8, 9--11) show the maximum, average, and median values of $\ml$ among 100 random walks visiting $N=10^6$ (resp., $N=2\cdot 10^6$, $N=10^7$) nodes each.}
    \label{tab:234}
\end{table}

We ran two additional experiments with the language $\AF(4,\frac{9}{5}^+)$. First, we took the longest word found by random search (it has length 2453) and extended it to the left with another long random search, repeated multiple times. The longest obtained word has length 3152, which seems to be a fair approximation of the maximum length of a word in $\AF(4,\frac{9}{5}^+)$. Second, we tried an exhaustive enumeration of the words in $\AF(4,\frac{9}{5}^+)$ to understand how fast is the initial growth and how far we can reach. We discovered that the language contains $10.68$ billions of lexmin words of length 90 compared to $9.49$ billions of such words of length 89. Hence, 90 is still quite far from the length where the number of words will be maximal.

\begin{figure}[!htb]
\centering
    \includegraphics[scale=0.65]{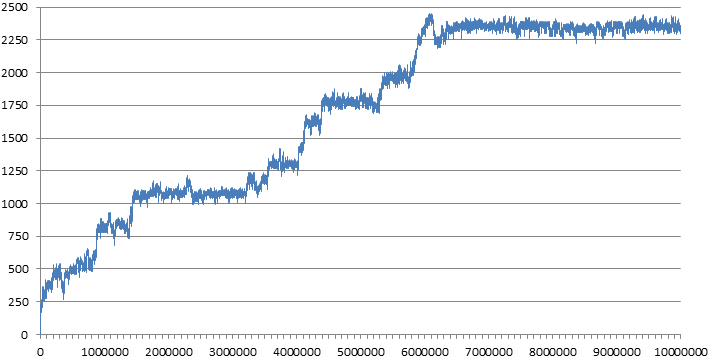}
    \caption{A finite-like random walk in the language $\AF(4,\frac{9}{5}^+)$: a point $(n,m)$ of the graph means that the $n$th node visited by the walk has depth $m$.}
    \label{f:4ary}
\end{figure}



\section{Future Work} \label{s:fin}

Clearly, the main challenge in the topic is to find the exact values of the Abelian repetition threshold. Even finding one such value would be a great progress. Choosing the case to start with, we would suggest proving $\ART(5)=3/2$ because in this case the lower bound is already checked by exhaustive search in \cite{SaSh12}. For all other alphabets, the proof of lower bounds suggested in Conjecture~\ref{c:art} is already a challenging task which cannot be solved by brute force.

Another piece of work is to refine Conjecture~\ref{c:art} by suggesting the precise values of $\ART(2)$, $\ART(3)$, $\ART(4)$, and $\ART(6)$. For bigger $k$, random walks demonstrate an obvious ``phase transition'' at the point $\frac{k-3}{k-4}$: the behaviour of a walk switches from finite-like for $\AF(k,\frac{k-3}{k-4})$ to infinite-like for $\AF(k,\frac{k-3}{k-4}^+)$. However, the situation with small alphabets can be trickier. For example, we tried the next natural candidate for $\ART(4)$, namely, $11/6$. For the random walks in $\AF(4,\frac{11}{6}^+)$, with $N=10^6$ and forced backtracks, the range of obtained maximum levels in our experiments varied from 3000 to 20000; such big lengths show that there is no hope to see a clear-cut phase transition in the experiments with random walks.

Finally, we want to draw attention to the following fact. The quaternary 2-A-free word constructed by Ker\"anen \cite{Ker92} contains arbitrarily long factors of the form $xa\bar x$, where $a$ is a letter and $x\sim \bar x$; thus it is not $\alpha$-A-free for any $\alpha<2$. Similarly, the word constructed by Dekking \cite{Dek79} for the ternary (resp., binary) alphabet is not $\alpha$-A-free for any $\alpha<3$ (resp., $\alpha<4$). Hence some new constructions are necessary to improve upper bounds for $\ART$.

\bibliographystyle{splncs04}
\bibliography{bibl}
\end{document}